\theoremstyle{plain}
\newtheorem{theorem}{Theorem}
\newtheorem{conclusion}[theorem]{Conclusion}
\theoremstyle{definition}
\newtheorem{definition}{Definition}
\newtheorem{notation}{Notation}
\newtheorem{lemma}{Lemma}
\pgfplotsset{compat=1.15}
\newcommand{\longversion}[1]{}
\newcommand{\shortversion}[1]{#1}
\title{Efficient covering of convex domains by congruent discs}
\author{Shai Gul, Reuven Cohen and Simi Haber}
\begin{document}

\maketitle

\begin{abstract}
In this paper, we consider the problem of covering a plane region with unit discs. We present an improved upper bound and the first nontrivial lower bound on the number of discs needed for such a covering, depending on the area and perimeter of the region. We provide algorithms for efficient covering of convex polygonal regions using unit discs. We show that the computational complexity of the algorithms is pseudo-polynomial in the size of the input and the output. We also show that these algorithms provide a constant factor approximation of the optimal covering of the region. \end{abstract}
%

\section{Introduction}
In discrete geometry, an efficient covering with unit circles for a given domain is a well known problem, with various applications such as facility location and cellular network design. The question of the optimal covering of a region is also a fundamental question in discrete geometry related to many deep questions on the nature of Euclidean space. In this manuscript we introduce an algorithm which determines the locations of unit discs that induce an efficient covering of a given polygonal domain.
Our approach is based on the properties of the hexagonal regular lattice (the honeycomb), which is the optimal lattice among all lattices in the plane in the density of the covering for a given radius of the enclosing circle (see \cite{Thue}). The presented algorithms are based on placing the centers of the discs at lattice points, where the location and orientation of the lattice relative to the covered region is optimally selected by the algorithms. We describe three algorithms, from the simplest one, achieving good results for convex polygons with low computational complexity, to a more complex one, which guarantees obtaining the optimal lattice-based covering for any convex polygon. \longversion{Subsequently we also present an algorithm achieving the optimal lattice-based covering for any polygon without requiring convexity.}
It should be noted that the provided algorithms are presented for obtaining a covering based on the hexagonal lattice, since it is asymptotically optimal for a fat region, as will be proven below. However, they can be adapted to every given lattice, which may be desirable in some cases. We show that the presented algorithms have polynomial complexity in the combined size of input and output.

Blaschke, T\'oth and Hardwiger \cite{IG} showed that a convex domain with area $A$ and
 perimeter $L$ can be covered by at most
\begin{eqnarray}
  \left\lfloor \frac{2}{3\sqrt{3}}A+\frac{2}{\pi\sqrt{3}}L+1 \right\rfloor\approx\left\lfloor 0.384 \cdot A+0.367 \cdot L+1 \right\rfloor
  \label{eq:1}
 \end{eqnarray}
 unit discs. Their result, using the probabilistic method, is based on estimating the expected number of hexagons in a hexagonal lattice that intersect a domain placed in a random orientation and location (see Section \ref{subsec: IG}). This is a nonconstructive existential result, and in particular it does not provide the desired locations of the discs that yield such a covering. 
The algorithms presented in this paper produce the list of discs location that guarantees a covering which achieves this bound. In fact, the second algorithm briefly described guarantees the optimal covering among all coverings based on the hexagonal lattice. We also provide an improved formula for this upper bound on the minimal number of required discs, which strengthens Eq.\ (\ref{eq:1}) above.

Finally, as a consequence of the above, we provide a bound on the approximation ratio between the number of discs in the optimal covering and the number of discs required by the presented algorithms for any region. Taking an asymptotic approach, and defining fat regions as a sequence of regions for which $L = o(A)$, we obtain an asymptotically optimal approximation ratio.  That is, the ratio between the number of discs required by the algorithm to the minimum number of discs required for any covering approaches 1 when the covered region becomes large.


\section{Related work}
\cite{DBLP:journals/jocg/GasparTH14,DBLP:journals/jocg/GasparTH14a} described a mechanism for the special case which locates $n$ discs with given radius $r$ to cover a maximum fraction of the area of a unit disc.
The goal of the facility location problem is to locate a minimal number of facilities such that a set of points (or possibly an entire domain) is covered. The first studies of this subject focused on methods of Integral Geometry \cite{Ha,IG,Thue,To1}. With the advances  in computer science, new algorithms and approaches have been developed for the facility location problem, and to the closely related $P$-centers problem.

Megiddo and Supowit  showed that the $P$-centers problem is NP-hard \cite{Meggido1}. In \cite{SUZUKI199669}, a heuristic upper bound to the optimal solution was described in a square. Hwang and Lee \cite{Hwang1993} showed that time complexity of the most efficient algorithm is $O(n^{O(\sqrt{P})})$.
In \cite{Svitkina}, a lower bound for the facility location problem was obtained  and an algorithm achieving a constant approximation ratio was presented.
In \cite{GUHA1999228}, a restricted version of the facility location problem was studied and a constant factor approximation was presented.
 In \cite{ELSHAIKH20161}, a learning mechanism was proposed to solve the $P$-center problem for
a continuous area.

In \cite{DBLP:journals/jocg/BhowmickVX15}, an algorithm for approximating the non-uniform minimum-cost multi-cover problem was described by studying the example of matching clients to servers.

\section{Preliminaries}
\begin{definition}
The \emph{Minkowski sum} of any two sets $A,B \subset \mathbb{R}^2$ is defined to be $A+ B:=\{ x+y:x\in A, y \in B \}$.
For $s \geq 0$, the \emph{Minkowski dilation} by factor $s$ is defined to be
$sA=\{sx:x \in A \}$.
\end{definition}
\begin{figure}[htp]
\centering
\definecolor{uuuuuu}{rgb}{0.26666666666666666,0.26666666666666666,0.26666666666666666}
\definecolor{xdxdff}{rgb}{0.49019607843137253,0.49019607843137253,1.}
\definecolor{zzttqq}{rgb}{0.6,0.2,0.}
\begin{tikzpicture}[line cap=round,line join=round,>=triangle 45,x=1.0cm,y=1.0cm]
\clip(-4.307602594682851,-2.4213146427352235) rectangle (5.6239283128855275,2.759301114046327);
\fill[line width=0.8pt,color=zzttqq,fill=zzttqq,fill opacity=0.10000000149011612] (0.,-1.) -- (0.8660254037844386,-0.5) -- (0.8660254037844386,0.5) -- (0.,1.) -- (-0.8660254037844387,0.5) -- (-0.8660254037844392,-0.5) -- cycle;
\fill[line width=0.8pt,fill=black,fill opacity=0.10000000149011612] (2.4230446235710117,1.812353841640518) -- (0.8633540349285211,0.5207819762621475) -- (2.606516685561288,-0.43176592025849037) -- (4.155095431139892,0.812353841640518) -- (4.2734755044724055,1.7590942417389774) -- cycle;
\fill[line width=0.8pt,dash pattern=on 1pt off 1pt,fill=black,fill opacity=0.10000000149011612] (0.10903800157187488,2.284734516011137) -- (2.7009782931640083,0.8489727721950916) -- (2.7009782931640083,-0.15120251915157853) -- (1.1585354673465895,-1.4593852458878707) -- (0.2925100635621508,-1.9593852458878707) -- (-1.5579208173392427,-1.9061256459863296) -- (-2.423946221123682,-1.4061256459863292) -- (-2.4239462211236815,-0.4061256459863293) -- (-2.3055661477911653,0.5406147541121301) -- (-0.7569874022125638,1.7847345160111379) -- cycle;
\draw [line width=0.8pt,color=zzttqq] (0.,-1.)-- (0.8660254037844386,-0.5);
\draw [line width=0.8pt,color=zzttqq] (0.8660254037844386,-0.5)-- (0.8660254037844386,0.5);
\draw [line width=0.8pt,color=zzttqq] (0.8660254037844386,0.5)-- (0.,1.);
\draw [line width=0.8pt,color=zzttqq] (0.,1.)-- (-0.8660254037844387,0.5);
\draw [line width=0.8pt,color=zzttqq] (-0.8660254037844387,0.5)-- (-0.8660254037844392,-0.5);
\draw [line width=0.8pt,color=zzttqq] (-0.8660254037844392,-0.5)-- (0.,-1.);
\draw [line width=0.8pt] (2.4230446235710117,1.812353841640518)-- (0.8633540349285211,0.5207819762621475);
\draw [line width=0.8pt] (0.8633540349285211,0.5207819762621475)-- (2.606516685561288,-0.43176592025849037);
\draw [line width=0.8pt] (2.606516685561288,-0.43176592025849037)-- (4.155095431139892,0.812353841640518);
\draw [line width=0.8pt] (4.155095431139892,0.812353841640518)-- (4.2734755044724055,1.7590942417389774);
\draw [line width=0.8pt] (4.2734755044724055,1.7590942417389774)-- (2.4230446235710117,1.812353841640518);
\draw [line width=0.8pt,dash pattern=on 1pt off 1pt] (0.10903800157187488,2.284734516011137)-- (2.7009782931640083,0.8489727721950916);
\draw [line width=0.8pt,dash pattern=on 1pt off 1pt] (2.7009782931640083,0.8489727721950916)-- (2.7009782931640083,-0.15120251915157853);
\draw [line width=0.8pt,dash pattern=on 1pt off 1pt] (2.7009782931640083,-0.15120251915157853)-- (1.1585354673465895,-1.4593852458878707);
\draw [line width=0.8pt,dash pattern=on 1pt off 1pt] (1.1585354673465895,-1.4593852458878707)-- (0.2925100635621508,-1.9593852458878707);
\draw [line width=0.8pt,dash pattern=on 1pt off 1pt] (0.2925100635621508,-1.9593852458878707)-- (-1.5579208173392427,-1.9061256459863296);
\draw [line width=0.8pt,dash pattern=on 1pt off 1pt] (-1.5579208173392427,-1.9061256459863296)-- (-2.423946221123682,-1.4061256459863292);
\draw [line width=0.8pt,dash pattern=on 1pt off 1pt] (-2.423946221123682,-1.4061256459863292)-- (-2.4239462211236815,-0.4061256459863293);
\draw [line width=0.8pt,dash pattern=on 1pt off 1pt] (-2.4239462211236815,-0.4061256459863293)-- (-2.3055661477911653,0.5406147541121301);
\draw [line width=0.8pt,dash pattern=on 1pt off 1pt] (-2.3055661477911653,0.5406147541121301)-- (-0.7569874022125638,1.7847345160111379);
\draw [line width=0.8pt,dash pattern=on 1pt off 1pt] (-0.7569874022125638,1.7847345160111379)-- (0.10903800157187488,2.284734516011137);
\draw [->,line width=0.8pt] (-0.0042577329546052145,0.) -- (0.8660254037844386,0.5);
\draw [->,line width=0.8pt] (2.721985866223757,0.8310775939734412) -- (0.8660254037844386,0.5);
\draw [->,line width=0.8pt] (-0.0042577329546052145,0.) -- (2.721985866223757,0.8310775939734412);
\begin{scriptsize}
\draw [fill=xdxdff] (-0.0042577329546052145,0.) circle (1.5pt);
\draw [fill=uuuuuu] (2.721985866223757,0.8310775939734412) circle (1.5pt);
\draw [fill=xdxdff] (0.8660254037844386,0.5) circle (1.5pt);
\end{scriptsize}
\end{tikzpicture}
\caption{The Minkowski sum (the dashed line) of $\varhexagon_0$ and (the reflection of) a convex polygon. The dots demonstrate that a set of points and a translation of another set intersect if and only if the translation resides in the Minkowski sum \cite{IG}.}
\label{fig:min_sim}
\label{img:delta}
\end{figure}
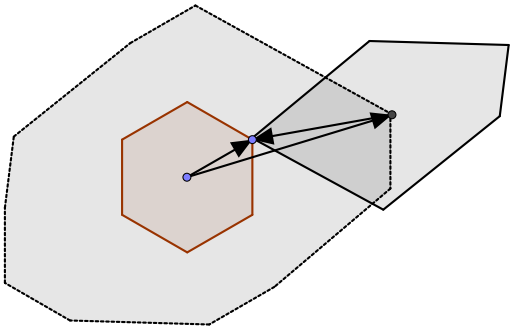
\begin{definition}
For every $\theta \in [0, 2 \pi)$, the \emph{support function} of a domain $\Omega$, denoted $h(\theta)$, is a function that maps every $\theta$ to the supremum over all $p$ such that $L(p, \theta) \cap \Omega \neq \emptyset$, where $L(p, \theta)$ is the line $\{(x,y)|\cos(\theta) x+ \sin (\theta) y =p\}$. The \emph{width} of $\Omega$ is $w(\theta)=h(\theta)+h(\theta+ \pi)$.
\end{definition}
Denote the diameter of a given domain $\Omega$ by $D$. We define  \[\Delta:=[-D-3,D+3]\times [-D-3,D+3] \subset \mathbb{R}^2 . \]
That is, we ensure that $M(\Omega)\subseteq \Delta$ for some rigid motion $M$.

\begin{samepage}
We are interested in the points of the hexagonal regular lattice $\Lambda_h$ contained in $\Delta$:
\begin{notation}
Denote the lattice points in $\Delta$ by $\overline{x}_{mn}:=m \cdot (\sqrt{3},0)+n \cdot (\frac{\sqrt{3}}{2},\frac{3}{2})$ for $m,n \in\mathbb{Z}$\longversion{(see Figure \ref{img:delta})}.
If $m=n=0$, the respective hexagon defined by $\varhexagon_0$. The support function of the hexagon will be denoted by $h_{\varhexagon}(\varphi)$.
\end{notation}
\end{samepage}
We state some standard theorems from integral geometry. We refer the reader to \cite{IG} or other standard textbooks in the field for full proofs.
\begin{theorem}[Cauchy's Formula]
\label{thm:chauchy}
Let $\Omega$ be a bounded convex domain
\begin{align*}
L(\partial \Omega)=\int_0^{2\pi} h(\theta)d \theta=\int_0^\pi w(\theta ) d \theta \ ,
\end{align*}
where  $h$ is the support function of $\Omega$ and $w$ is $\Omega$'s width.
\end{theorem}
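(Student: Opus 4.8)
The plan is to establish the two equalities in turn, reducing everything to the single claim $L(\partial\Omega)=\int_0^\pi w(\theta)\,d\theta$. The right-hand equality $\int_0^{2\pi}h(\theta)\,d\theta=\int_0^\pi w(\theta)\,d\theta$ requires no geometry at all: by the definition $w(\theta)=h(\theta)+h(\theta+\pi)$ together with the change of variables $\theta\mapsto\theta-\pi$ and the $2\pi$-periodicity of $h$,
\[
\int_0^\pi w(\theta)\,d\theta=\int_0^\pi h(\theta)\,d\theta+\int_0^\pi h(\theta+\pi)\,d\theta=\int_0^\pi h(\theta)\,d\theta+\int_\pi^{2\pi}h(\theta)\,d\theta=\int_0^{2\pi}h(\theta)\,d\theta .
\]
So it suffices to prove $L(\partial\Omega)=\int_0^\pi w(\theta)\,d\theta$.

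First I would prove this assuming $\partial\Omega$ is a $C^2$ convex curve, parametrized by arc length $s\mapsto\gamma(s)$, $0\le s\le L$, with unit tangent $T(s)=(\cos\phi(s),\sin\phi(s))$, where the turning angle $\phi$ increases monotonically by $2\pi$ over one period. Fix $\theta$ and project $\Omega$ orthogonally onto the line $\mathbb{R}u_\theta$, where $u_\theta=(\cos\theta,\sin\theta)$. Since $\langle z,u_\theta\rangle$ attains its extrema $h(\theta)$ and $-h(\theta+\pi)$ on $\Omega$, the projection is the segment of length $w(\theta)=h(\theta)+h(\theta+\pi)$, and by convexity a point in its relative interior has exactly two preimages on $\partial\Omega$. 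The infinitesimal arc $\gamma'(s)\,ds$ contributes projected length $|\langle T(s),u_\theta\rangle|\,ds=|\cos(\phi(s)-\theta)|\,ds$, so counting the projection with multiplicity gives $2\,w(\theta)=\int_0^L|\cos(\phi(s)-\theta)|\,ds$. Integrating over $\theta\in[0,\pi)$ and applying Tonelli's theorem,
\[
2\int_0^\pi w(\theta)\,d\theta=\int_0^L\!\Big(\int_0^\pi|\cos(\phi(s)-\theta)|\,d\theta\Big)ds=\int_0^L 2\,ds=2L ,
\]
because $\int_0^\pi|\cos(\phi(s)-\theta)|\,d\theta=\int_0^\pi|\cos\psi|\,d\psi=2$ for every $s$ by periodicity. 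Hence $\int_0^\pi w(\theta)\,d\theta=L=L(\partial\Omega)$ in the smooth case. (An alternative route in the smooth case: the boundary point with outward normal $u_\theta$ is $p(\theta)=h(\theta)u_\theta+h'(\theta)u_\theta^{\perp}$, whence $|p'(\theta)|=h(\theta)+h''(\theta)\ge 0$ and $L=\int_0^{2\pi}(h+h'')\,d\theta=\int_0^{2\pi}h\,d\theta$, since $\int_0^{2\pi}h''=0$; combined with the first paragraph this also closes the argument.)

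Finally I would remove the smoothness hypothesis by approximation: take a sequence of $C^2$ convex domains $\Omega_k$ converging to $\Omega$ in the Hausdorff metric — for instance smoothings of the outer parallel bodies of $\Omega$ at distance $1/k$ — and pass to the limit. Perimeter is continuous under Hausdorff convergence of convex bodies, the support functions $h_k$ converge uniformly to $h$ (hence $w_k\to w$ pointwise), and all $w_k$ are bounded by the diameter $D$, so dominated convergence transfers the identity $L(\partial\Omega_k)=\int_0^\pi w_k\,d\theta=\int_0^{2\pi}h_k\,d\theta$ to $\Omega$. The step I expect to require the most care is exactly this limiting/bookkeeping part of the smooth argument: verifying that the ``exactly two preimages'' claim and the multiplicity count hold off a null set of $\theta$ (directions normal to a flat edge) and of $s$ (the corner points), and that the chosen approximation preserves both sides. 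A cleaner alternative for a fully self-contained treatment is to prove the polygonal case directly — for a convex polygon $h$ is piecewise sinusoidal and $\int_0^\pi w(\theta)\,d\theta$ collapses to $\sum_i \ell_i=L$ over the edges by an elementary computation — and then approximate a general convex $\Omega$ by inscribed convex polygons.
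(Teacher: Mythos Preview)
Your argument is correct and is one of the standard textbook proofs of Cauchy's formula. Note, however, that the paper does not actually prove this theorem: it explicitly states that Theorems~\ref{thm:chauchy} and the subsequent area formula are classical results from integral geometry and refers the reader to \cite{IG} for full proofs. So there is no ``paper's own proof'' to compare against; you have simply supplied what the paper chose to cite rather than include. Both routes you sketch --- the Crofton-style projection-with-multiplicity computation and the support-function parametrization $p(\theta)=h(\theta)u_\theta+h'(\theta)u_\theta^{\perp}$ giving $L=\int_0^{2\pi}(h+h'')\,d\theta$ --- are standard, and your approximation step (Hausdorff convergence of convex bodies, uniform convergence of support functions, continuity of perimeter) is the usual way to pass from the smooth or polygonal case to the general one.
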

\begin{theorem}
Suppose $\Omega$ is a compact, convex domain with a continuous boundary. Then
\begin{equation}
A(\Omega)=\frac{1}{2}\int_0^{2 \pi} ( h^2-h'^2 )d \theta \ ,
\label{eq:1.1}
\end{equation}
where the support's derivative $h'$ may be a distribution.
\end{theorem}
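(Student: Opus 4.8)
The plan is to parametrize $\partial\Omega$ by the angle $\theta$ of its outward unit normal and then apply the elementary (shoelace) area identity $A(\Omega)=\tfrac12\oint_{\partial\Omega}(x\,dy-y\,dx)$. For a convex domain the boundary point whose outward normal points in direction $(\cos\theta,\sin\theta)$ lies on the supporting line $L(h(\theta),\theta)$, and taking the envelope of this one–parameter family of lines — differentiate $x\cos\theta+y\sin\theta=h(\theta)$ in $\theta$ to obtain $-x\sin\theta+y\cos\theta=h'(\theta)$, then solve — gives the explicit formula
\[
\bigl(x(\theta),y(\theta)\bigr)=h(\theta)(\cos\theta,\sin\theta)+h'(\theta)(-\sin\theta,\cos\theta).
\]
First I would record this parametrization, check that it traces $\partial\Omega$ counterclockwise once as $\theta$ runs over $[0,2\pi)$, and compute the differentials $dx=-(h+h'')\sin\theta\,d\theta$ and $dy=(h+h'')\cos\theta\,d\theta$; in particular $ds=(h+h'')\,d\theta$ is the arclength element, which incidentally recovers Cauchy's formula by integration.

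Next I would substitute into the shoelace integral. The trigonometric terms collapse: $x\,dy-y\,dx=h\,(h+h'')\,d\theta=(h^2+h\,h'')\,d\theta$, hence $A(\Omega)=\tfrac12\int_0^{2\pi}(h^2+h\,h'')\,d\theta$. An integration by parts on the circle, using that $h$ is $2\pi$-periodic so that the boundary term $[h\,h']_0^{2\pi}$ vanishes, converts $\int_0^{2\pi}h\,h''\,d\theta$ into $-\int_0^{2\pi}(h')^2\,d\theta$, which is exactly the claimed identity $A(\Omega)=\tfrac12\int_0^{2\pi}(h^2-h'^2)\,d\theta$.

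The step I expect to require the most care is regularity: for a general convex $\Omega$ with merely continuous boundary the support function need not be $C^2$ — where $\partial\Omega$ contains a segment, $h'$ has a jump and $h''$ carries a Dirac mass — so the computation above is a priori only formal. I would make it rigorous by approximation: replace $\Omega$ by $\Omega_\varepsilon:=\Omega+\varepsilon B$, the Minkowski sum with a disc of radius $\varepsilon$, whose support function is $h+\varepsilon$, and then mollify to obtain smooth strictly convex bodies for which the computation applies verbatim. Both sides pass to the limit as $\varepsilon\to 0$: $A(\Omega_\varepsilon)\to A(\Omega)$, $\int h_\varepsilon^2\to\int h^2$ trivially, and $\int (h_\varepsilon')^2\to\int (h')^2$ because the support functions converge in $W^{1,2}$, $h'$ being a bounded function of bounded variation. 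Alternatively one can argue directly in $BV$, reading $\int h\,h''$ as the pairing of the continuous function $h$ with the finite signed measure $h''$ and justifying the integration by parts for $BV$ functions on the circle; this is the precise sense in which ``$h'$ may be a distribution'' should be understood.
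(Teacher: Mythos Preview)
Your argument is correct and is in fact the standard textbook derivation of this identity. Note, however, that the paper does not supply its own proof of this theorem: it is listed among the ``standard theorems from integral geometry'' for which the reader is referred to \cite{IG}, so there is no in-paper proof to compare against. Your envelope-plus-shoelace computation, followed by integration by parts and the $BV$/approximation remarks to handle corners and flat pieces, is exactly the argument one finds in Santal\'o and similar references, so you have effectively reconstructed the cited proof.
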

Notice that if $\Omega$ is a polygon, $h$ is not differentiable on a finite set of points and we indeed need to allow $h'$ to be a distribution.

\longversion{
 \begin{figure}[htp]
\centering
\input{../Images/Algorithm1_1.tex}
\caption{The desired grid $\Delta:=[-D-3,D+3]\times [-D-3,D+3]$ }
\label{img:delta}
\end{figure}
}

\subsection{Upper bounds using integral geometry}
\label{subsec: IG}
\longversion{
\subsubsection{2D}
\begin{theorem}
\[
(\alpha,\beta) \in T_{ab} \Leftrightarrow  \varhexagon_{ab} \cap E(\alpha, \beta,0) \cdot\Omega \neq \emptyset
\]
and $E(\alpha, \beta,0)$ shifts $\Omega$ to $(\alpha,\beta)$ (which is in the group $E(2)$).
\end{theorem}
\begin{proof}
 ($\Longleftarrow$)\\
 \begin{eqnarray*}
 \varhexagon_{ab} \cap E(\alpha, \beta,0) \cdot\Omega \neq \emptyset &\Longrightarrow \exists \overline{v}:\left( (\overline{v}\in\varhexagon_{ab})\cap \left(E(\alpha, \beta, 0)\cdot\Omega\right)\right) \\  &\Longrightarrow \exists \overline{v}:((\overline{v}\in \varhexagon_{ab}) \wedge \overline{v} \in( E(\alpha,\beta,0) \cdot \Omega))\nonumber\\ &\Longrightarrow \exists \overline{v}: ((\overline{v}\in \varhexagon_{ab})\wedge ((\overline{v}-(\alpha,\beta)) \in  \Omega )) \nonumber\\ &\Longrightarrow
 \exists \overline{v}: ((\overline{v}\in \varhexagon_{ab}) \wedge  ( -(\overline{v}-(\alpha,\beta)) \in - \Omega)) \\ & \Longrightarrow \exists \overline{v}: ((\overline{v}\in\varhexagon_{ab}) \wedge (-(\overline{v}-(\alpha,\beta)) \in (R(\pi)\cdot \Omega))) \\ & \Longrightarrow (\alpha,\beta) \in ((R(\pi)\cdot \Omega)+ \varhexagon_{ab}) \Longrightarrow (\alpha,\beta) \in T_{ab}.\nonumber
\end{eqnarray*}
($\Longrightarrow$) \\
 \[
  (\alpha,\beta) \in T_{ab} \Longrightarrow (\alpha,\beta)\in (R( \pi)\cdot \Omega) + \varhexagon_{ab},
  \]
   so
  \begin{equation}
 (( \exists x_1 : x_1 \in (R(\pi)\cdot \Omega)) \wedge (\exists x_2 : x_2 \in \varhexagon_{ab})):x_1+x_2=(\alpha,\beta).
 \label{eq:0.6}
 \end{equation}
\begin{eqnarray*}
 x_1 \in (R(\pi)\cdot\Omega) & \Longrightarrow -x_1 \in -(R(\pi)\cdot\Omega) \Longrightarrow -x_1 \in (R(2 \pi)\cdot\Omega) \\ &\Longrightarrow -x_1 \in \Omega \Longrightarrow (-x_1+(\alpha,\beta))\in E(\alpha,\beta,0)\cdot\Omega.
\end{eqnarray*}
Now, by (\ref{eq:0.6})
\[
x_1+x_2=(\alpha,\beta)\Longrightarrow(-x_1+(\alpha,\beta))=x_2 ,
\]
 so, indeed $\varhexagon_{ab} \cap E(\alpha, \beta,0) \cdot\Omega \neq \emptyset$.
\end{proof}
Note that the above result can be obtained for every convex domain $K_0$ and a domain $K_1:E(\alpha,\beta,0) \cdot K'$. Since the hexagonal lattice is the optimal lattice in $\mathbb{R}^2$, we choose $K_0$ to be $\varhexagon_{ab}$.
}

\longversion{Barbier's Theorem follows from Theorem \ref{thm:chauchy}.
\begin{theorem}[Barbier's Theorem]
Every curve of constant width has perimeter $\pi$ times its width, regardless of its precise shape.
\end{theorem}
}
The following two theorems are classical results first published by Blaschke. They are brought here for completeness.
\begin{theorem}
Given domains $\Omega_0$ and $\Omega_1$, where $\Omega_1$ is of the form  $\Omega_1=\{(\alpha,\beta,0)\} + \Omega'$, where $(\alpha,\beta)$ is chosen randomly by the uniform distribution in a ball with radius $r$, then $P(\Omega_0 \cap  \Omega_1 \neq \emptyset)= \frac{\mathrm{Area}(\Omega_0 + R(\pi)\cdot \Omega_1)}{\pi r^2}$.
\label{thm:Min_sum}
\end{theorem}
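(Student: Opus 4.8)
The plan is to convert the probability directly into a ratio of areas, using the Minkowski-sum characterization of when two planar sets intersect (the same mechanism illustrated in Figure~\ref{fig:min_sim}). Throughout, write $\Omega_1=\{(\alpha,\beta)\}+\Omega'$ as in the statement, let $B_r$ denote the disc of radius $r$ from which $(\alpha,\beta)$ is drawn uniformly, and assume — as is the case in all our applications — that $r$ is large enough that the Minkowski sum appearing below is contained in $B_r$.

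First I would record the pointwise equivalence
\[
\Omega_0\cap\bigl(\{(\alpha,\beta)\}+\Omega'\bigr)\neq\emptyset
\quad\Longleftrightarrow\quad
(\alpha,\beta)\in\Omega_0+R(\pi)\cdot\Omega'.
\]
This is exactly the argument behind the description of $T_{ab}$, now with the hexagon replaced by a general domain $\Omega_0$: a common point $v$ lies in $\Omega_0$ and equals $(\alpha,\beta)+w$ for some $w\in\Omega'$ precisely when $(\alpha,\beta)=v+(-w)=v+R(\pi)\cdot w\in\Omega_0+R(\pi)\cdot\Omega'$, and both implications reverse. No convexity is needed; one only needs that the $\Omega_i$ are measurable so that the Minkowski sum $\Omega_0+R(\pi)\cdot\Omega'$ is measurable (it is compact when the $\Omega_i$ are compact).

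Next I would translate the event into Lebesgue measure. Since $(\alpha,\beta)$ is uniform on $B_r$, for any measurable $S$ we have $\Pr[(\alpha,\beta)\in S]=\mathrm{Area}(S\cap B_r)/\mathrm{Area}(B_r)=\mathrm{Area}(S\cap B_r)/(\pi r^2)$. Applying this with $S=\Omega_0+R(\pi)\cdot\Omega'$, and using the containment $S\subseteq B_r$, gives $\Pr[\Omega_0\cap\Omega_1\neq\emptyset]=\mathrm{Area}(\Omega_0+R(\pi)\cdot\Omega')/(\pi r^2)$. Finally, $\mathrm{Area}(\Omega_0+R(\pi)\cdot\Omega')=\mathrm{Area}(\Omega_0+R(\pi)\cdot\Omega_1)$, because $\Omega_1$ and $\Omega'$ differ by a translation, hence so do $R(\pi)\cdot\Omega_1$ and $R(\pi)\cdot\Omega'$, and forming a Minkowski sum and then taking area is translation-invariant; this is the asserted formula.

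The computation is essentially bookkeeping, so the ``hard part'' is really just the two hypotheses one must keep honest: (i) measurability of the Minkowski sum, which is immediate for the compact domains we use, and (ii) the assumption that the sampling disc contains $\Omega_0+R(\pi)\cdot\Omega'$ — without it the right-hand side would have to read $\mathrm{Area}\bigl((\Omega_0+R(\pi)\cdot\Omega_1)\cap B_r\bigr)/(\pi r^2)$. Both are harmless in our setting, where $\Omega_0$ is a lattice hexagon, $\Omega'$ is a copy of the compact convex region to be covered, and $r$ is taken on the order of the diameter $D$, matching the definition of $\Delta$.
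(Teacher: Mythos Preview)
Your proof is correct and follows essentially the same approach as the paper: both establish the pointwise equivalence $\Omega_0\cap(\{(\alpha,\beta)\}+\Omega')\neq\emptyset \iff (\alpha,\beta)\in\Omega_0+R(\pi)\cdot\Omega'$ via the $v=(\alpha,\beta)+w$ decomposition, and then read off the probability as an area ratio. If anything, your write-up is more careful than the paper's---you make explicit the containment hypothesis $S\subseteq B_r$ (the paper only writes ``$r\gg 1$'') and the translation-invariance step identifying $\mathrm{Area}(\Omega_0+R(\pi)\cdot\Omega')$ with $\mathrm{Area}(\Omega_0+R(\pi)\cdot\Omega_1)$.
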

\longversion{
\begin{figure}[htp]
\centering
\input{../Images/Min_su_vec_thesis.tex}
	\caption{Constructing the support function of $K_0 + R(\pi)\cdot K_1$}
\label{fig:support_Min}
\end{figure}
\begin{figure}[htp]
\centering
\input{../Images/con_T(a,b)1_thesis.tex}
	\caption{An intersection between a convex domain $\Omega$ (blue) and a regular hexagon which yields that the center of $\Omega$ in the Minkowski sum of $\left( R(\pi)\cdot \Omega\right) + \varhexagon_{ab}$ and vice versa}
\label{img:construction}
\end{figure}
}
\begin{proof}
(Based on \cite{IG})
Denote by $b_0$ and $b_1$ the centers of $\Omega_0$ and $\Omega_1$, respectively. The vector $z_0$ is the vector which determines $\Omega_0$. In a similar way, we will define for $\Omega_1$.
Given $\Omega_0 \cap \Omega_1 \neq \emptyset$, there exists a point such that $b_0+z_0=b_1+z_1$, i.e., $z_1=z_0+b_0+(-b_1)$ \longversion{(as in Figure~\ref{fig:support_Min}}). Without loss of generality, take $b_0=0$. If $b_1 \in \Omega_1$ then $-b_1 \in R(\pi)\cdot \Omega_1$, so $z_1\in \Omega_0 + R(\pi)\cdot \Omega_1$, which is the area intersection between $\Omega_0$ and $\Omega_1$ in $\mathbb{R}^2$. So the probability in $\mathbb{R}^2$ for the desired intersection is
\[
\frac{\mathrm{Area}(\Omega_0 + R(\pi)\cdot \Omega_1)}{\pi r^2}\;,
\]
 where $r\gg 1$.
\end{proof}

\begin{theorem}
\label{thm:toth}
 Consider $A_1$ as the area of a finite convex region $\Omega_1$ whose perimeter is $L_1$, then the region can be covered using
 \begin{align*}
 \lfloor \frac{2}{3\sqrt{3}}A_1+\frac{2}{\pi\sqrt{3}}L_1+1 \rfloor
 \end{align*}
unit circles, where the square brackets indicate the floor function.
\end{theorem}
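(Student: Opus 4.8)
The plan is to revive the classical probabilistic argument of Blaschke that the excerpt has been setting up: place a hexagonal lattice at random and count the cells that meet $\Omega_1$. Let $\varhexagon_0$ be the regular hexagon inscribed in the unit disc — its side length, hence circumradius, is $1$, so $\mathrm{Area}(\varhexagon_0)=\tfrac{3\sqrt3}{2}$ and $L(\partial\varhexagon_0)=6$ — and let $\varhexagon_{mn}=\varhexagon_0+\overline{x}_{mn}$ be the cells of the associated hexagonal lattice $\Lambda_h$, which tile $\mathbb{R}^2$ with a fundamental cell $F$ of area $\tfrac{3\sqrt3}{2}$. Put $\Omega_1$ in a random pose: rotate it by a uniformly random angle $\theta\in[0,2\pi)$ and then translate it by a vector $t$ uniform in $F$, independently. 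Let $N$ be the number of cells $\varhexagon_{mn}$ that intersect the resulting copy of $\Omega_1$. Since the cells tile the plane, every point of the placed copy of $\Omega_1$ lies in some cell, which therefore meets $\Omega_1$ and is counted; hence the union of the $N$ counted cells contains that copy of $\Omega_1$, and replacing each counted cell by the unit disc circumscribing it covers it (a rigid motion does not affect coverability by congruent discs). It therefore suffices to prove $\mathbb{E}[N]=\tfrac{2}{3\sqrt3}A_1+\tfrac{2}{\pi\sqrt3}L_1+1$: as $N$ is integer valued, some pose attains $N\le\lfloor\mathbb{E}[N]\rfloor$, which is the claimed bound.

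To evaluate $\mathbb{E}[N]$, fix the orientation first and write $\Omega_\theta$ for the rotated copy. By the Minkowski-sum characterization of intersection (the identity underlying Theorem \ref{thm:Min_sum} and Figure \ref{fig:min_sim}), $\varhexagon_{mn}$ meets $t+\Omega_\theta$ precisely when $t\in\varhexagon_{mn}+R(\pi)\Omega_\theta=\big(\varhexagon_0+R(\pi)\Omega_\theta\big)+\overline{x}_{mn}$. Summing the corresponding indicators over $(m,n)$ and integrating $t$ over $F$ amounts to folding $\varhexagon_0+R(\pi)\Omega_\theta$ onto the torus $\mathbb{R}^2/\Lambda_h$, so $\int_F N\,dt=\mathrm{Area}\big(\varhexagon_0+R(\pi)\Omega_\theta\big)$ and hence $\mathbb{E}[N\mid\theta]=\mathrm{Area}\big(\varhexagon_0+R(\pi)\Omega_\theta\big)/\mathrm{Area}(F)$. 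Averaging over $\theta$ gives
\[
\mathbb{E}[N]=\frac{1}{\mathrm{Area}(F)}\cdot\frac{1}{2\pi}\int_0^{2\pi}\mathrm{Area}\big(\varhexagon_0+R(\pi)R(\theta)\Omega_1\big)\,d\theta .
\]

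The one step that needs genuine computation is the rotation mean of a Minkowski-sum area: for convex bodies $K,M$,
\[
\frac{1}{2\pi}\int_0^{2\pi}\mathrm{Area}\big(K+R(\theta)M\big)\,d\theta=\mathrm{Area}(K)+\mathrm{Area}(M)+\frac{L(\partial K)\,L(\partial M)}{2\pi}.
\]
I would obtain this from Cauchy's formula (Theorem \ref{thm:chauchy}) together with Eq.\ (\ref{eq:1.1}): since $h_{K+R(\theta)M}(\varphi)=h_K(\varphi)+h_M(\varphi-\theta)$, expanding $\mathrm{Area}=\tfrac{1}{2}\int(h^2-h'^2)$ yields $\mathrm{Area}(K)+\mathrm{Area}(M)+\int_0^{2\pi}\big(h_K(\varphi)h_M(\varphi-\theta)-h_K'(\varphi)h_M'(\varphi-\theta)\big)d\varphi$; averaging in $\theta$ and applying Fubini, the inner averages collapse to $\tfrac{1}{2\pi}\int_0^{2\pi}h_M(\varphi-\theta)\,d\theta=\tfrac{1}{2\pi}L(\partial M)$ (Cauchy) and $\tfrac{1}{2\pi}\int_0^{2\pi}h_M'(\varphi-\theta)\,d\theta=0$ (integral of a derivative of a periodic function), leaving $\tfrac{1}{2\pi}L(\partial K)L(\partial M)$. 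For polygonal $\Omega_1$ the same computation goes through with $h'$ interpreted distributionally, as noted after Eq.\ (\ref{eq:1.1}), and $R(\pi)$ affects neither area nor perimeter and is absorbed by the average. Substituting $\mathrm{Area}(F)=\mathrm{Area}(\varhexagon_0)=\tfrac{3\sqrt3}{2}$, $L(\partial\varhexagon_0)=6$, $\mathrm{Area}(\Omega_1)=A_1$, and $L(\partial\Omega_1)=L_1$ gives $\mathbb{E}[N]=\tfrac{2}{3\sqrt3}A_1+\tfrac{2}{\pi\sqrt3}L_1+1$, as required. I expect the rotation-mean identity (an integral-geometric computation, delicate only in the bookkeeping of the non-smooth support function of a polygon) to be the main point; the tiling and expectation arguments are then routine.
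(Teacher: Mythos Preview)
Your proposal is correct and follows essentially the same route as the paper: Blaschke's probabilistic argument of placing the hexagonal lattice at random and bounding the number of cells meeting $\Omega_1$ by its expectation, which is computed via the Minkowski-sum area formula. The only difference is cosmetic---the paper quotes the mean-count formula $\overline{\nu}=\tfrac{2\pi(A_0+A_1)+L_0L_1}{2\pi A_0}$ as a known consequence of Theorem~\ref{thm:Min_sum} and then bounds $N\le\nu$, whereas you derive the rotation-mean identity explicitly from Cauchy's formula and Eq.~(\ref{eq:1.1}) and compute $\mathbb{E}[N]$ directly; for convex $\Omega_1$ these give the same value.
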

\begin{proof}
As a consequence of the above theorem, the mean value of the number of pieces in which a domain $\Omega_1$, limited by a single curve of length $L_1$, is divided when it is put at random on a lattice whose fundamental domains have area $A_0$ and contour of length $L_0$, is
\begin{equation}
\overline{\nu}=\frac{2 \pi(A_0+A_1)+L_0L_1}{2 \pi A_0} \ .
\label{eq:3.3}
\end{equation}
The number $N$ of fundamental domains which have a common point with $\Omega_1$ is always $N \leq \nu$. Consequently, $\overline{N} \leq \nu $ and we get:
Any domain $\Omega_1$ of area $A_1$, limited  by a single curve of length $L_1$, can be covered by a number $\mu$ of fundamental domains of area $A_0$ and contour $L_0$ which satisfies the inequality $\mu \leq \overline{\nu}$, where $\overline{\nu}$ has the value (\ref{eq:3.3}).

If the lattice is that of regular hexagons (which is the optimal lattice) of side $a$, we find that every $\Omega_1$ can be covered by a number of hexagons not exceeding
\begin{equation}
\left[1+\frac{2 L_1}{\sqrt{3} a \pi}+ \frac{2A_1}{2 \sqrt{3}a^2} \right] \ .
\end{equation}
Considering the circles circumscribed about the regular hexagons of side $a$, we obtain the result that every $\Omega_1$ can be covered by this number of circles of radius $a$. Choose $a=1$ and we have the desired result. \end{proof}

\section{Lower bounds for covering}
In this section we prove a lower bound by taking advantage of a few properties of a specific family of multi-graphs.

Let $\Omega$ be a convex region (not necessarily a polygon) with area $A$ and perimeter $L$. We will show the following:
\begin{theorem}
Let $n_\mathrm{opt}$ be the number of unit discs in a minimum covering of $\Omega$. Then
$ n_\mathrm{opt} \geq \max \left\{ \frac{2A}{3\sqrt{3}},\frac{L}{4}\right\}-C$, for some absolute constant $C>0$.
\label{thm:lower_bnd}
\end{theorem}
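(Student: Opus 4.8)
The plan is to prove the two inequalities $n_{\mathrm{opt}}\ge \tfrac{2A}{3\sqrt3}-C$ and $n_{\mathrm{opt}}\ge \tfrac L4-C$ separately and then take the larger one (enlarging the constant so that a single $C$ serves both). Fix an irredundant minimum covering $D_1,\dots,D_n$ with centres $c_1,\dots,c_n$. If some $D_i\supseteq\Omega$ then $n_{\mathrm{opt}}=1$ and $\operatorname{diam}\Omega\le 2$, so $A\le\pi$ and $L\le 2\pi$ and both inequalities hold once $C\ge \max\{\tfrac{2\pi}{3\sqrt3},\tfrac\pi2\}-1$; so assume no disc contains $\Omega$. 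Assigning each $x\in\Omega$ to the nearest centre and using $\Omega\subseteq\bigcup_j D_j$ produces the clipped Dirichlet cells $W_i=\{x\in\Omega:\,|x-c_i|\le|x-c_j|\ \forall j\}\subseteq D_i$, which are convex, have diameter $\le 2$, and partition $\Omega$. Record this partition as a plane multigraph $G$: its vertices are the points where at least three cells meet or where a cell boundary hits $\partial\Omega$ (with degree‑$2$ vertices suppressed), its edges are the resulting Dirichlet segments and arcs of $\partial\Omega$, and its bounded faces are the $W_i$; multi‑edges occur exactly when two cells share more than one Dirichlet segment or when one cell meets $\partial\Omega$ in more than one arc, and limiting the latter (using convexity of $\Omega$) is a recurring technical point.

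For the perimeter bound, the cells partition $\Omega\supseteq\partial\Omega$, so $L=\sum_i\operatorname{length}(\partial\Omega\cap W_i)\le\sum_{i:\,W_i\cap\partial\Omega\neq\emptyset}\operatorname{length}(\partial\Omega\cap D_i)$. Hence everything reduces to the estimate that a unit disc $D$ meets a convex curve not contained in it in total length at most $4$. I would derive this from Cauchy's formula (Theorem~\ref{thm:chauchy}) and the Cauchy–Crofton integral: a line meets $D$ in a chord of length at most $2$ and meets a convex curve in at most two points, so $\operatorname{length}(\partial\Omega\cap D)=\tfrac12\int_0^\pi\!\!\int_{\mathbb R}\#\bigl(\partial\Omega\cap D\cap\ell_{\theta,p}\bigr)\,dp\,d\theta$, where the count equals $2$ only for lines that cross $\partial\Omega$ twice inside $D$; a short convexity argument about where $\partial\Omega$ can enter and leave $D$ shows that the measure of such lines (weighted over $\theta$) is largest for the degenerate ``thin strip straddled symmetrically by $D$'', which gives exactly $4$. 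Summing over the at most $n$ cells meeting $\partial\Omega$ gives $L\le 4n$, i.e.\ $n\ge L/4$.

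For the area bound I would run the Euler/Fejes Tóth argument on $G$. After suppressing degree‑$2$ vertices every vertex has degree $\ge 3$, so $2|E|\ge 3|V|$; combined with $|V|-|E|+(n+1)=2$ this forces $|E|\le 3n-3$, hence $\sum_i s_i=2|E|-(\text{number of edges on }\partial\Omega)\le 6n-6$, where $s_i$ denotes the number of sides of $W_i$ — so the faces have fewer than $6$ sides on average, with no loss at the boundary. The remaining ingredient is the isoperimetric fact that a convex cell inside a unit disc cut by $s$ chords/arcs has area at most a concave, non‑decreasing $g(s)$ with $g(6)=\tfrac{3\sqrt3}{2}$ (the regular hexagon inscribed in the unit circle); then Jensen's inequality gives $A=\sum_i\operatorname{area}(W_i)\le n\,g\!\left(\tfrac1n\sum_i s_i\right)\le n\,g(6)=\tfrac{3\sqrt3}{2}\,n$, i.e.\ $n\ge\tfrac{2A}{3\sqrt3}$. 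The hard part is precisely this last step: $\operatorname{area}(W_i)\le g(s_i)$ is false for an arbitrary convex cell in a disc (an almost‑full disc may have only six ``sides''), so one must exploit the structure of $G$ — that adjacent cells cut complementary caps out of each other's discs, and that irredundancy together with convexity of $\Omega$ prevents a cell from being almost a whole disc with only a few cuts — to recover the estimate up to an additive absolute constant. Combining the two directions and absorbing all additive constants into a single $C$ finishes the proof.
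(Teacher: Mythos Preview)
Your perimeter argument contains a genuine error. The claim ``a unit disc $D$ meets a convex curve not contained in it in total length at most $4$'' is false: take $\Omega=[-0.8,0.8]^2$ and $D$ the unit disc centred at the origin; the corners of $\Omega$ lie at distance $0.8\sqrt2>1$, so $\Omega\not\subseteq D$, yet each side of $\Omega$ meets $D$ in a segment of length $2\sqrt{1-0.64}=1.2$, giving $\operatorname{length}(\partial\Omega\cap D)=4.8$. In fact the supremum of $\operatorname{length}(\partial\Omega\cap D)$ over convex $\Omega\not\subseteq D$ is $2\pi$ (approach it with regular $k$-gons inscribed in circles of radius $\to 1^+$), so Crofton alone can only yield $n\ge L/(2\pi)$, which is strictly weaker than the asserted $L/4$. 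The paper avoids this by a different route: it looks at the polygon $U=\bigcup_i U_i$ formed by the Voronoi cells with each boundary arc of $\partial\Omega$ replaced by its chord, uses a cactus--graph lemma to show $U$ has at most $2m$ exterior edges (where $m$ is the number of boundary cells), bounds the total exterior edge length by $4m\le 4n$ via Jensen on $D_i\sin(\pi/D_i)$, and finally compares $L(\partial\Omega)$ with $L(\partial U)$ through the support function, picking up only an additive $4\pi$.

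For the area bound you correctly identify the obstruction: a clipped cell $W_i$ with curved boundary arcs need not satisfy $\operatorname{area}(W_i)\le g(s_i)$. Your suggested repair (``exploit the structure of $G$'') is too vague to count as a proof. The paper's resolution is concrete and quite different from what you sketch: it replaces each curved arc by its chord to obtain genuine polygons $U_i$, applies the Fejes T\'oth/Jensen argument to the $U_i$ (for which $A_i\le (d_i/2)\sin(2\pi/d_i)$ is valid), and then bounds the leftover area $A-\sum_i A_i$ between the chords and $\partial\Omega$ by an absolute constant, using the isoperimetric inequality on each circular segment together with the fact that the total turning of $\partial\Omega$ is $2\pi$. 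The cactus lemma is again needed here, to control the number of exterior edges and hence the average degree. So both halves of your plan need the chord-replacement idea and the cactus bound, neither of which appears in your proposal.
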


We prove the area term and the perimeter term separately, beginning with the area. The two parts share a common base insight --- given a covering of the region we consider the Voronoi tessellation defined by taking the centers of the discs of the covering as the seeds of the tessellation.

Let $V_i$ denote the Voronoi cell of the center of the $i$th disc, and let $\Omega^c$ be the complement of $\Omega$.
These regions define a graph $G$ where adjacency in the graph reflects shared edges. We will distinguish between two Voronoi cells types: \emph{inner cells} are the Voronoi cells that do not intersect the boundary of $\Omega$ (except possibly at discrete points) and \emph{exterior cells} are all the cells that have a positive measure intersection with the boundary of $\Omega$. Each of the edges between Voronoi cells is arbitrarily ascribed to only one of the cells, so the cells are disjoint and their union is the entire plane.
Assume there are a total of $n$ cells, of which $m$ are exterior.

We will use the following technical lemma:
\begin{lemma}
Consider a tour along the boundary of the region, starting from an arbitrary point and proceeding counterclockwise. Let $i_1,i_2,i_3,\ldots,i_\ell$ be the sequence of indices of exterior Voronoi cells visited by along the tour. Then $\ell\leq 2m$.
\label{lem:cactus}
\end{lemma}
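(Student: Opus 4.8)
The plan is to track how the tour along $\partial\Omega$ can re-enter exterior cells it has already visited, and to bound the number of such re-entries using a planarity / topological argument about the cactus-like structure of the exterior cells along the boundary. First I would set up the combinatorial object to be analyzed: as the tour proceeds counterclockwise along $\partial\Omega$, it passes through a cyclic sequence of exterior cells $i_1,i_2,\dots,i_\ell$ (with $i_{j}\ne i_{j+1}$, indices mod $\ell$, since consecutive distinct cells along the boundary share a boundary arc). The length $\ell$ counts \emph{visits}, not distinct cells; since there are only $m$ exterior cells, the content of the lemma is that no cell is visited too many times on average — precisely, the total number of visits is at most $2m$.

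The key step is to argue that the ``boundary adjacency graph'' — the multigraph $H$ whose vertices are the $m$ exterior cells and whose edges are the $\ell$ consecutive-pairs $\{i_j,i_{j+1}\}$ encountered along the closed tour — is planar, and in fact forms (a subgraph of) a \emph{cactus}: every edge lies on at most one cycle. Here is why. Each exterior cell $V_i$ is convex (Voronoi cells are convex), and its intersection with $\overline\Omega$ is convex, hence its intersection with the Jordan curve $\partial\Omega$ is a single arc (a connected sub-arc of the boundary). Therefore each exterior cell is visited by the tour in one contiguous stretch; i.e., the sequence $i_1,\dots,i_\ell$ has the property that all occurrences of any fixed index are consecutive in the cyclic order. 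A cyclic sequence in which every symbol appears in one contiguous block, read as a closed walk on the distinct symbols, traverses each vertex exactly once up to the block structure, so the walk visits $m$ vertices and, being closed, uses exactly $m$ steps — giving $\ell = m \le 2m$ in the generic case. The slack ``$2m$'' (rather than $m$) absorbs the degenerate possibilities: a cell's intersection with $\partial\Omega$ could be empty of positive measure yet still touch at discrete points, or the convex region $\Omega$ could be a polygon whose boundary meets a Voronoi edge in a segment, causing the arc picture to split into at most two pieces per cell. I would handle these by a small perturbation argument or by directly observing that each exterior cell contributes at most two maximal boundary arcs, hence at most two ``entry'' events, so $\ell\le 2m$.

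So the steps in order are: (1) record that consecutive cells in the tour are distinct and share a positive-length boundary arc; (2) use convexity of Voronoi cells and of $\overline\Omega$ to show each exterior cell meets $\partial\Omega$ in a connected arc (away from degeneracies), hence is visited in one contiguous stretch; (3) conclude that the number of maximal runs in the cyclic sequence $i_1,\dots,i_\ell$ is at most the number of distinct exterior cells, i.e.\ at most $m$, and since a closed tour with $r$ maximal runs has exactly $r$ run-boundaries, deduce $\ell$ equals the number of runs in the non-degenerate case; (4) pad by a factor of two to cover polygonal or tangential degeneracies where a cell may be entered twice, obtaining $\ell\le 2m$. The main obstacle I anticipate is step (2)–(4): making rigorous the claim that each exterior Voronoi cell meets the boundary in a bounded number of arcs, since the cells share edges that have been arbitrarily reassigned to make them disjoint, so ``the cell'' is not literally convex anymore. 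The clean fix is to do the arc-counting on the \emph{closed} Voronoi cells (which are genuinely convex) and note that reassigning shared boundary edges changes the visited sequence only at isolated points, not affecting $\ell$; alternatively, absorb any extra boundary incidences into the factor of $2$, which is exactly the room the statement leaves.
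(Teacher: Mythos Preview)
Your argument has a genuine gap at its central claim. You assert that because $V_i$ and $\overline\Omega$ are both convex, the intersection $V_i\cap\partial\Omega$ is a single arc, and hence each exterior cell is visited in one contiguous stretch. This implication is false: the intersection of a convex set with the \emph{boundary} of another convex set need not be connected. Concretely, take $\Omega$ to be a disc of radius $0.4$ centered at the origin, and cover it by unit discs centered at $0$ and at $0.5\,e^{i\pi/3}$, $0.5\,e^{i\pi}$, $0.5\,e^{i5\pi/3}$. The Voronoi cell $V_0$ of the origin is an equilateral triangle with vertices at distance $0.5$ and edge-midpoints at distance $0.25$ from the origin; since $0.25<0.4<0.5$, the circle $\partial\Omega$ enters and exits $V_0$ three times, and the boundary tour reads $0,1,0,2,0,3$. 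Thus a single cell can be visited three times, so neither ``one contiguous stretch'' nor the ad hoc doubling to two stretches is correct; the factor $2$ in $\ell\le 2m$ is not slack for degeneracies but is genuinely saturated by cactus-like configurations.

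What the paper actually proves is weaker than your claim but sufficient: it shows that the cyclic sequence cannot contain a subsequence of the form $i,\ldots,j,\ldots,i,\ldots,j$. The reason is that a chord of $\Omega$ joining two boundary points in $V_i$ lies entirely in $V_i$ (convexity of $V_i$) and separates $\Omega$ into two pieces (convexity of $\Omega$); a chord joining two boundary points in $V_j$ lying on opposite sides would have to cross it, and the crossing point would lie in the disjoint cells $V_i$ and $V_j$ simultaneously. This non-crossing condition is exactly what forces the boundary adjacency multigraph to be a cactus, and then an inductive construction of cacti gives $\ell\le 2m$. You mention the cactus structure but never establish it; the step you need is the non-crossing argument, not the (false) single-arc claim.
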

\begin{proof}
The sequence cannot contain a subsequence of the form $i,\ldots,j,\ldots,i,\ldots,j$ since the line segment between two points on the boundary pertaining to the two visits in the $i$th cell bisects $\Omega$ into two disjoint regions, and thus the line segment between  two points on the boundary pertaining to the two visits in the $j$th cell, who belong to the two different regions, must intersect it. However, by convexity of Voronoi cells, the intersection point must belong to both $V_i$ and $V_j$, that are disjoint, leading to a contradiction.

Thus, the subgraph of $G$ induced by the vertices corresponding to exterior cells is a \emph{cactus graph} --- a connected graph in which any two simple cycles have at most one vertex in common. Indeed, if two simple cycles intersect at two points, $u, v$, then touring along the perimeter we will visit them in the order $u,v,u,v$, or $v,u,v,u$, forming the forbidden pattern.

Cactus graphs can be constructed by starting with a single vertex $i$, and using the following basic steps (here we represent a cactus graph with $m$ vertices by an Euler closed walk of length $t$):
\begin{enumerate}
\item Adding an edge, i.e., replacing some $i$ in the sequence by $i,j,i$, where $j$ does not appear anywhere else in the sequence. This step enlarges $m$ by one and $t$ by two.
\item \label{step: triangle}Adding a triangle, i.e., replacing $i$ in the sequence by $i,j,k,i$, where $j$ and $k$ do not appear anywhere else in the sequence.
    This step enlarges $m$ by two and $t$ by three.
\item Extending an edge, i.e., replacing $i,j$ in the sequence with $i,k,j$ where $k$ does not appear anywhere else in the sequence.
    This step enlarges both $m$ and $t$ by one.
\item Extending a bidirectional edge, that is, replacing $i,j,\ldots,j,i$ in the sequence with $i,k,j,\ldots,j,k,i$ where $k$ does not appear anywhere else in the sequence.
    This step enlarges $m$ by one and $t$ by two.
\end{enumerate}
Thinking of the second step as two mini-steps each incrementing $m$ by one and $t$ by $3/2$, we see that each of these steps increases $m$ by at least one, and none of the steps increases $t$ by more than two. Thus $t \leq 2m$.
\end{proof}

\begin{lemma}
For any covering of $\Omega$ by $n$ unit discs, $n\geq \frac{2(A-C)}{3\sqrt{3}}$, where $C$ is a constant.
\end{lemma}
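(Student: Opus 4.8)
The plan is to run the Voronoi--Euler argument underlying Fejes~T\'oth's density bound for circle coverings, using the tessellation $\{V_i\}_{i=1}^n$ of the disc centres $c_1,\dots,c_n$ of the covering, exactly as set up above.

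The starting point is the containment $V_i\cap\Omega\subseteq \overline{D}_i$, where $D_i$ is the unit disc centred at $c_i$: every $x\in\Omega$ lies in some $D_j$, so $|x-c_j|\le 1$; if moreover $x\in V_i$ then $c_i$ is a closest centre to $x$, hence $|x-c_i|\le|x-c_j|\le 1$. Therefore
\[
A=\sum_{i=1}^n \mathrm{Area}(V_i\cap\Omega)\le\sum_{i=1}^n \mathrm{Area}(V_i\cap\overline{D}_i),
\]
and each $V_i\cap\Omega$ is a convex set lying inside a unit disc. It thus suffices to bound the right-hand side by $\tfrac{3\sqrt3}{2}\,n+C'$ for a constant $C'$.

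For an \emph{inner} cell $V_i$ (one contributing positive area), the whole cell lies in $\overline\Omega$, hence in $\overline D_i$, so $V_i$ is a convex polygon with, say, $v_i$ vertices contained in a unit disc; among all such polygons the regular $v_i$-gon inscribed in the bounding circle has the largest area, so $\mathrm{Area}(V_i)\le \phi(v_i)$ with $\phi(v):=\tfrac{v}{2}\sin(2\pi/v)$. A short calculus check shows $\phi$ is increasing and concave on $[3,\infty)$ and $\phi(6)=\tfrac{3\sqrt3}{2}$. Euler's formula for the planar subdivision carrying the Voronoi diagram gives $\sum_i v_i\le 6n$, so Jensen's inequality yields $\sum_{i\ \mathrm{inner}}\mathrm{Area}(V_i)\le \sum_i\phi(v_i)\le n\,\phi(6)=\tfrac{3\sqrt3}{2}\,n$, which is precisely the leading term we want.

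The genuine obstacle is the \emph{exterior} cells: such a $V_i$ need not lie inside $\overline D_i$ (a centre all of whose Voronoi neighbours are at distance $>2$ engulfs its own disc, i.e. $\overline D_i\subseteq V_i$), so $\mathrm{Area}(V_i\cap\Omega)$ can exceed $\tfrac{3\sqrt3}{2}$ and in fact reach $\pi$. To recover the estimate I would exploit the convexity of $\Omega$: the boundary of the convex body $V_i\cap\Omega\subseteq\overline D_i$ consists of Voronoi edges together with a single convex arc of $\partial\Omega$, so $V_i\cap\Omega$ is contained in a convex polygon with $v_i'$ vertices lying inside $\overline D_i$ (where $v_i'$ also counts the extra sides contributed by $\partial\Omega$), giving $\mathrm{Area}(V_i\cap\Omega)\le\phi(v_i')$ again. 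The heart of the matter is to show that these extra sides along $\partial\Omega$, together with the few wasteful engulfing cells, inflate $\sum_i v_i'$ only by $O(1)$ beyond $6n$ (or, at worst, by $O(L)$, which would still suffice for the fat-region corollary): here I would use that a convex arc of $\partial\Omega$ trapped in a unit disc has length $O(1)$, that the exterior subgraph is a cactus (Lemma~\ref{lem:cactus}), and the smoothness of $\phi$ near $v=6$ to absorb the slack into the additive constant $C$. Carrying out this boundary bookkeeping carefully, so that the resulting loss is a true additive constant, is the technical crux of the proof.
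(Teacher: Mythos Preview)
Your framework---Voronoi cells, the regular-polygon-in-a-disc bound, Euler's formula, Jensen, and the cactus lemma for the boundary combinatorics---matches the paper's. The gap is exactly where you flag it, but it is deeper than bookkeeping.

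For an exterior cell, $V_i\cap\Omega$ is bounded by Voronoi edges together with one or more arcs of $\partial\Omega$. Because $\Omega$ is convex, each such arc bulges \emph{away} from its chord, outward from the polygonal part of the cell; the cap between arc and chord lies \emph{inside} $V_i\cap\Omega$. Hence there is no natural polygon inscribed in $\overline D_i$ with $v_i'$ vertices that \emph{contains} $V_i\cap\Omega$, and your inequality $\mathrm{Area}(V_i\cap\Omega)\le\phi(v_i')$ is not available. (A convex region in a unit disc with only two ``corners'' can have area arbitrarily close to $\pi$.) What \emph{is} available is the reverse containment: replacing each boundary arc by its chord produces a genuine polygon $U_i\subseteq V_i\cap\Omega$ inscribed in the unit disc, to which the bound $\phi(d_i)$ legitimately applies. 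This is what the paper does, and then the cactus lemma plus Euler give $\sum_i\mathrm{Area}(U_i)<\tfrac{3\sqrt3}{2}\,n$ exactly as you outline.

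The missing ingredient---and the reason the constant is absolute rather than $O(L)$---is a separate bound on the total cap area $C=A-\sum_i\mathrm{Area}(U_i)$. The paper notes that each cap lies between an arc of $\partial\Omega$ of length $l_i\le 2\pi$ (it sits in a unit disc) and its chord; by the isoperimetric inequality the cap area is maximised when the arc is circular of some radius $r_i$, giving area $\tfrac12\bigl(l_ir_i-r_i^2\sin(l_i/r_i)\bigr)\le l_i^3/(12r_i)$. Since $\partial\Omega$ is a convex Jordan curve, the total turning satisfies $\sum_i l_i/r_i=2\pi$, and combining these yields $C\le(2\pi)^2\cdot\tfrac{2\pi}{12}=\tfrac{2\pi^3}{3}$. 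It is this total-curvature argument, not further vertex-counting, that forces $C$ to be an absolute constant.
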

\begin{proof}
We define the \emph{Voronoi polygon} $U_i$ to be the Voronoin cell $V_i$ if it is internal, and if $V_i$ is external we define it to be the convex polygon (convex hull) formed by the edges between the Voronoi cell $V_i$ and neighboring Voronoi cells and by the points of intersection of $V_i$ with the boundary of $\Omega$. Let $A_i$ be the area of $U_i$.

If $U_i$ is Voronoi polygon with $d_i$ edges, then its area is maximized if it is a regular polygon inscribed in the unit disc, thus $A_i \leq \frac{{d}_i}{2} \sin \frac{2\pi}{{d}_i}$.
By the convexity of the function $x\sin(2\pi/x)$ and and application of Jensen's inequality we get
\[\sum_{i=1}^{n}A_i\leq\sum_{i=1}^{n}\frac{{d}_i}{2} \sin \frac{2\pi}{{d}_i} \leq n  \frac{\overline{d}}{2} \cdot \sin \frac{2\pi}{\overline{d}} \;, \]
where $\overline{d}$ is the average degree in $G$.

Consider the graph of adjacency between the $V_i$s and $\Omega^c$. This is a planar graph with $n+1$ vertices, and thus the sum of degrees is at most $6(n+1)-12=6n-6$. Now, at least $m$ of these edges are incident with $\Omega^c$. Thus the sum of degrees of the Voronoi cells is at most $6n-6-m$. However, each of the exterior cells may have more than one edge shared with $\Omega^c$. The excess degrees obtained by this is at most $t-n\leq n$ (by Lemma \ref{lem:cactus}), which leads to $\sum_{i=1}^m d_i\leq 6m-6-n+n=6m-6$, or $\overline{d}<6$. Thus,
\[\sum_{i=1}^{n}A_i<n\frac{6}{2} \cdot \sin \frac{2\pi}{6}=\frac{3\sqrt{3}}{2}n \;. \]

Now let $C=A - \sum_{i=1}^n A_i$ be the area of $\Omega$ not included in any of the Voronoi polygons. Let $e_i$, $i=1,\ldots,t$ be the exterior edges of the exterior Voronoi polygons.  Let $l_i$ be the lengths of the parts of the boundary of $\Omega$ that are external to their respective $e_i$. by the isoperimetric inequality we have that the maximum area between each $l_i$ and its $e_i$ is obtained when $l_i$ is a circular arc and the area is a circular segment. The area of the segment is given by $C_i=(l_i r_i-r_i^2\sin(l_i/r_i))/2$, where $r_i$ is the radius of curvature. Since the boundary of $\Omega$ is a Jordan curve we have that the total curvature is $\sum_{i=1}^t l_i/r_i=2\pi$. We also have that since each $l_i$ resides in a unit disc, its length is at most $l_i\leq 2\pi$.  Thus, the total area outside the Voronoi polygons is given by:
\[C\leq \sum_{i=1}^t \frac{l_i r_i-r_i^2\sin(l_i/r_i)}{2}\leq \sum_{i=1}^t\frac{l_i^3/r_i}{12}\leq (2\pi)^2\sum_{i=1}^t\frac{l_i/r_i}{12}=\frac{2\pi^3}{3} \;,\]
using the fact that $\sin x\geq x-x^3/6$ for $0\leq x\leq \pi$. Therefore, $C$ is bounded by an absolute constant and the proof is complete.
\end{proof}

For the perimeter term we have the following:
\begin{lemma}
For any covering of $\Omega$ by $n$ unit discs, $n\geq \frac{L}{4}-C$.
\end{lemma}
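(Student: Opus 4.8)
The plan is to bound the perimeter $L$ from above in terms of the number $m$ of exterior Voronoi cells, and then combine this with the trivial inequality $m \le n$. Intuitively, each exterior cell $V_i$ sees only a bounded amount of the boundary $\partial\Omega$: since $V_i$ is the Voronoi cell of a disc center, every point of $V_i$ lies within distance $1$ of that center (as the whole cell must be covered by the unit disc, or at least the part of $V_i$ inside $\Omega$ is), so the portion of $\partial\Omega$ meeting $V_i$ is contained in a disc of radius $1$. A convex arc contained in a unit disc has length at most the full circumference, but we want the sharper constant $4$: the key observation is that $\partial\Omega$ is convex, so the part of $\partial\Omega$ lying in any convex region $V_i$ is a single connected convex arc whose endpoints lie in $V_i$, hence at distance at most $2$ apart; a convex arc with endpoints at distance $\le 2$ that also stays within a disc of radius $1$ has length at most $\dots$ and here one uses that a convex arc cannot be longer than the perimeter of its convex hull, which for an arc trapped in a unit disc with chord $\le 2$ is at most $2 + \pi$ — but that only gives the weaker bound.

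To get the clean constant $4$ I would instead argue more carefully, or simply accept a larger absolute constant absorbed into $C$. The honest route: first establish that the trace of $\partial\Omega$ on each exterior cell is a convex arc of length at most some universal constant $c_0$ (e.g. $c_0 = 2\pi$, or $c_0 = 4$ with the sharper chord argument — since the arc is convex and lies in a unit disc, project onto the two coordinate axes; the total variation in each coordinate is at most the diameter $2$ of the unit disc, and for a convex (hence monotone-coordinate-piece-decomposable) arc the length is at most the sum of the coordinate variations times a constant, giving length $\le 4$). Then, because the exterior cells partition $\partial\Omega$ into arcs — one per visit in the tour of Lemma~\ref{lem:cactus} — we have $L = \sum_{k=1}^{\ell} (\text{length of the }k\text{th arc}) \le \ell \cdot c_0 \le 2m \cdot c_0$ by Lemma~\ref{lem:cactus}. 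Since $m \le n$, this yields $n \ge L/(2c_0)$, and with the sharp $c_0 = 2$ (achievable because along a single visit the arc is not merely convex but monotone, so its length is at most the $L^1$-diameter $2\sqrt{2}$... ) — in any case some constant multiple of $L$ minus a constant.

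So the steps in order are: (1) fix a minimum covering by $n$ discs and the associated Voronoi tessellation, with $m$ exterior cells; (2) note each exterior cell is convex and, being (the relevant part of) a Voronoi cell of a disc center covering it, is contained in a disc of radius $1$, so each arc of $\partial\Omega$ cut out by a single exterior-cell visit in the counterclockwise tour is a convex arc trapped in a unit disc; (3) bound the length of such an arc by an absolute constant via a coordinate-projection / convex-hull-perimeter argument; (4) sum over the $\ell$ arcs in the tour and invoke Lemma~\ref{lem:cactus} to get $L \le \ell \cdot c_0 \le 2 m c_0$; (5) use $m \le n$ to conclude $n \ge L/(2c_0) \ge L/4 - C$ after adjusting constants, folding any shortfall in the constant into the additive $C$ (which combines with the constant from the area lemma to give the $C$ of Theorem~\ref{thm:lower_bnd}).

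The main obstacle I anticipate is pinning down the exact constant $4$ rather than just "some constant": the naive bound (a convex arc in a unit disc has length $\le 2\pi$) gives $n \ge L/(4\pi) - C$, which is weaker. Getting $4$ requires exploiting that within a single visit the boundary arc is not only convex but also has its two endpoints on the (convex) Voronoi cell, so the chord has length $\le 2$, and that a convex arc with a short chord staying in a unit disc is short — the cleanest formalization is that the arc together with its chord bounds a convex region inside the unit disc, whose perimeter is at most that of the unit disc's relevant cap, but tightening this to exactly $4 = 2 \cdot 2$ (diameter times two, from summing $x$- and $y$-monotone pieces) is the delicate part. If that fails I would retreat to a worse explicit constant, since Theorem~\ref{thm:lower_bnd} as stated tolerates an arbitrary additive $C$ but does demand the coefficient $1/4$; matching it exactly is where the real work lies.
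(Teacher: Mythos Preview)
Your plan is a genuine attempt but it has a real gap at exactly the point you flag: the per-arc constant $c_0=2$ that you need is simply false. A single visit of $\partial\Omega$ to an exterior Voronoi cell can be a semicircular arc of the unit disc, of length $\pi>2$; more generally a convex arc trapped in a unit disc can have length up to $2\pi$. So bounding each of the $\ell\le 2m$ arcs separately and summing can never beat $L\le 2\pi\cdot 2m$, i.e.\ $n\ge L/(4\pi)$, which does \emph{not} imply $n\ge L/4-C$ for large $L$. The projection/monotonicity heuristics you sketch give at best $c_0=4$ or worse, for the same reason.

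The paper avoids this by not bounding $\partial\Omega$ arc-by-arc at all. Instead it looks at the polygon $U=\bigcup_i U_i$ bounded by the \emph{chords} (the exterior edges of the Voronoi polygons), and bounds $L(\partial U)$. If cell $i$ contributes $D_i$ exterior chords, their total length is at most that of a regular $D_i$-gon inscribed in the unit disc, namely $2D_i\sin(\pi/D_i)$ (or $2$ when $D_i=1$). Since $x\mapsto x\sin(\pi/x)$ is concave, Jensen together with $\sum_i D_i=\ell\le 2m$ from Lemma~\ref{lem:cactus} gives $L(\partial U)\le 4m\le 4n$. The passage from $\partial U$ back to $\partial\Omega$ is then done \emph{globally}, not arc-by-arc: every point of $\partial\Omega$ lies within distance $2$ of $U$, so the support functions satisfy $h_\Omega\le h_U+2$, and Cauchy's formula (Theorem~\ref{thm:chauchy}) yields $L\le L(\partial U)+4\pi\le 4n+4\pi$.

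The two ideas you are missing, then, are: (i) aggregate first --- sum the chord lengths and apply Jensen to the concave function $D\sin(\pi/D)$, which is what produces the sharp $4$ --- rather than bounding each arc of $\partial\Omega$ by a constant; and (ii) compare $\partial\Omega$ to the polygonal boundary $\partial U$ through support functions, so the discrepancy becomes a single additive $4\pi$ instead of a multiplicative loss on every arc.
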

\begin{proof}
Consider the boundary of $U:=\cup_i U_i$. $U$ is a polygon, which has at most $t\leq 2m\leq 2n$ edges by Lemma \ref{lem:cactus}. For each disc, $i$, having $D_i$ exterior edges we get that the longest total length of the external edges is obtained when the external edges form a regular polygon inscribed in the disc. This polygon has perimeter 2 in the degenerated case $D_i=1$ and perimeter $2D_i\sin(\pi/D_i)$ for $D_i\geq 2$. Assume $k$ cells have $D_i=1$ and the rest of the exterior cells have $D_i\geq 2$. The total length of the external edges is thus at most
\begin{align*}
2k &+\sum_{i=1}^m 2D_i\sin\frac{\pi}{D_i} \leq \\ &\leq 2k+2(m-k)\overline{D}\sin\frac{\pi}{\overline{D}}\leq 2k+2(m-k)\frac{2m-k}{n-k}\sin\frac{\pi(m-k)}{2m-k}\leq 4m  \leq 4n\ ,
\end{align*}
where $\overline{D}$ is the average degree of the cells having at least two exterior edges, and we have applied the Jensen inequality and Lemma \ref{lem:cactus}.

Now each point on the boundary of $\Omega$ has a point at a distance at most two inside $U$, since they fit in the same disc.  Thus, the support function $h_\Omega(\theta)$ of $\Omega$ relative to some point inside $U$ satisfies $h_\Omega(\theta)\leq h_U(\theta)+2$. By Theorem \ref{thm:chauchy}, $L(\partial \Omega)\leq L(\partial U)+4\pi$.
\end{proof}

It should be noted that both the area and perimeter terms are asymptotically sharp, as can be seen in the case of covering a large fat region by discs arranged in a hexagonal lattice configuration (see Section \ref{sec:performance}), and by covering a long narrow (width zero) rectangle by a line of kissing discs, respectively. However, it may be possible to improve Theorem \ref{thm:lower_bnd} by using some combination of the area and perimeter terms which is not the maximum.

\section{Algorithms for covering}


In this section we aim to find a covering of a convex domain $\Omega$ and non-convex domain by unit discs. Since every hexagon in the hexagonal tiling can be bounded by a unit circle, a covering of $\Omega$ by a subset of the hexagonal lattice gives rise to a covering by unit discs centered at the corresponding lattice points. The hexagonal regular lattice has been proven by T\'oth to be the minimum density lattice to cover the plane and thus it is plausible to use it as the basis for covering. We will show that for fat objects it is indeed asymptotically optimal.

In order to minimize the number of discs, we would like to find the location and orientation for the domain $\Omega$ leading to a minimal number of faces in the hexagonal regular lattice which intersects the domain.
\subsection{Separate orientation and location optimization}
\label{sub:2D}

\shortversion{

\begin{figure}[h!]
\centering
\hspace*{-3cm}
\input{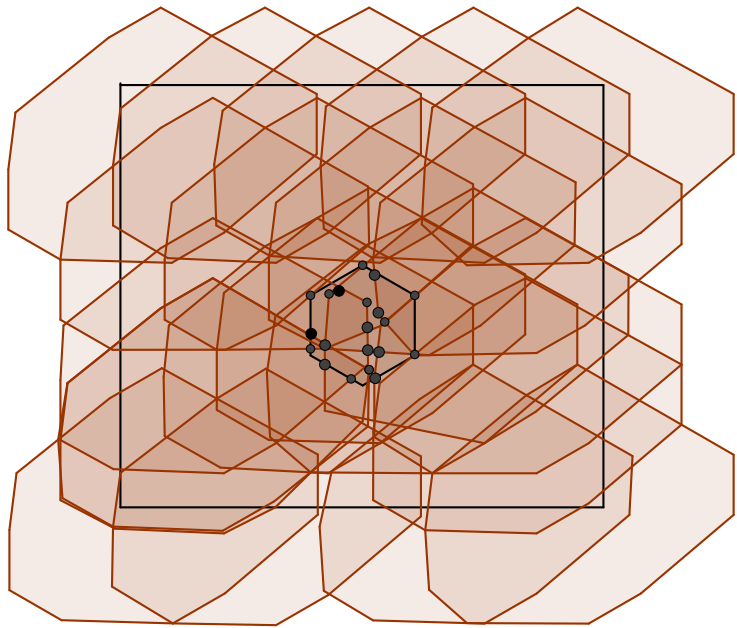}
\caption{Sets of Minkowski sums (respective to the construction in Fig.\ \ref{fig:min_sim}) in $\Delta$. The algorithm explores the cardinality of all intersection points or corners inside $\varhexagon_0$. The point minimizing the number of Minkowski sums is the desired location for the domain in the hexagonal grid.}
\label{fig:co_nin_sum}
\end{figure}
}
\begin{algorithm}
\SetAlgoLined
\KwResult{List of centers of unit discs covering $\Omega$. }
 Given a convex polygon $\Omega$ \newline
 \If{Diameter of minimum enclosing circle of $\Omega\leq 1$ }
 {\Return center of minimum enclosing circle}
\CommentSty{Phase I --- finding $\theta^*$} \newline
  Calculate the width function $w_0$\newline
  Find the diameter of the given domain\newline
  Calculate the Minkowski sum of $(R(\pi+\theta)\cdot\Omega) + \varhexagon_0$ \newline
  Find all the lattice points $\overline{x}_{mn}$ which are contained in $\Delta$ \newline
  Find the optimal angle $\theta$ by minimizing:\
 \[
\theta_0=\arg\min_\theta\left\{w_0(\theta)+w_0\left(\theta+\frac{\pi}{3}\right)+w_0\left(\theta+\frac{2\pi}{3}\right) \quad \left( 0 < \theta \leq  \pi \right) \right\} \
\]\
Place $\Omega$ in orientation with $\theta_0$\\
\CommentSty{Phase II --- finding the centers of covering discs} \newline
$points=\emptyset$ \newline
 \For{$\overline{x}_{mn} \in \Delta$}{
  Find $T_{\overline{x}_{mn}}^\Omega (\theta)$
  }
  \For {$i=1$:$|\overline{x}_{mn}|$}{
    \For {$j=i+1$:$|\overline{x}_{mn}|$}{
				\For {\textnormal{each intersection point} $y\in\partial T_i\cap\partial T_j\cap \varhexagon_{{}_0}$ }{$points=[points, y_i]$}
				}}
  \For{j=1:$|{points}|$}{
    \For{i=1:$|\overline{x}_{mn}|$}{
        Index(j)=0 \newline
        \If {$points_i \in Int T_{i}^\Omega (\theta)$}
            {$Index_j=Index_j+1$\\
           $ \min(Index_j)$}
        }}
        $point^*=points(\arg\_\min{\{Index(j)\}})$ \\

%
\Return list of all lattice points contained in the Minkowski sum
 \caption{Optimal translation given orientation}
 \label{alg:opt_pos}
\end{algorithm}
Since the orientation and location of the domain can be chosen independently, we can choose each of them separately. We will start by selecting the orientation as to minimize the {\em expected} number of intersecting hexagons, and then select the location as to minimize the number of intersections {\em for this orientation}. This may lead to a sub-optimal combination of orientation and translation, i.e., the desired location. However, this method allows us to provide a bound for the approximation ratio.

Algorithm \ref{alg:opt_pos}  determines the optimal positioning for a given  optimal orientation $\theta^*$ (see Lemma \ref{lem:exp}), in the sense of the expected number of lattice points for a random positioning.
The algorithm input are lattice points $\overline{x}_{mn}\in \Delta$ and the respective Minkowski sum's \[T_{\overline{x}_{mn}}^{\Omega}(\theta^*):=(R(\pi+\theta^*)\cdot\Omega) + \varhexagon_{\overline{x}_{mn}},\]
where again, $\theta^*$ is the optimal orientation of $\Omega$ found in Lemma \ref{lem:exp}.

Notice that if $\overline{x}_{mn} \notin \Delta$, then $\varhexagon_0 \cap T_{\overline{x}_{mn}}(\theta)= \emptyset$. Therefore, we can concentrate only on the points $\overline{x}_{mn} \in \Delta$.

Since the lattice is periodic, we can limit our discussion to the hexagon which located at the origin: $\varhexagon_0$ (it is sufficient to do the shifting in $ (R(\pi+\theta)\cdot\Omega)+\varhexagon_{(0+\varepsilon)}$, where  $\varepsilon \in \varhexagon_0$).

Algorithm \ref{alg:opt_pos} output gives the desired shifting.

\longversion{
\begin{figure}[h!]
\centering
\input{../Images/Algorithm1_3.tex}
\caption{The desired construction of the Minkowski sum's (respective to the construction in Fig. \ref{fig:min_sim}), which intersect in $\hexagon_0$}
\label{img:partial_min_sum}
\end{figure}
}


\subsection{Combined orientation and location algorithm}
The algorithm we presented in Section \ref{sub:2D} finds the optimal orientation for a random location and then finds the optimal location for this orientation.
However, this does not guarantee that the combination of orientation and location is optimal. Indeed, consider a long rectangle of width 1. The Minkowski sum area is minimized when adding a hexagon oriented s.t.\ the upper edge if the hexagon is parallel to the upper edge of the rectangle. However a better orientation will be to rotate one of the regions by $\pi/2$. Hence we study an algorithm optimizing over both orientation and location, and this section is devoted to its presentation.

\longversion{
The group of motions $E(2)$, on a point $(x,y)\in \mathbb{R}^2$ is
\[
\left( \begin{array}{ccc} \cos\theta & -\sin\theta & a \\ \sin\theta & \cos\theta & b \\ 0 & 0 & 1  \end{array} \right)\left(\begin{array}{c} x \\ y \\ 1 \\ \end{array}\right)=\left(\begin{array}{c} x \cos\theta-y \sin\theta+a \\ x \sin\theta+y \cos\theta+b \\ 1 \\ \end{array}\right) \ .
\]
Let the group of motions be represented by a $3$-dimensional domain, i.e., for every point $(x,y) \in \Omega$, we will define the following function
\[
E(a,b,\theta):=\left(\begin{array}{c} x \cos\theta-y \sin\theta+a \\ x \sin\theta+y \cos\theta+b \\ \theta \\ \end{array}\right) \ .
\]
}

We denote the Minkowski sum as follows:
\begin{equation}
T_{ab}(\theta):=\left\{(x,y, \theta) \mid (x,y)\in\left(R(\pi+\theta)\cdot \Omega + \varhexagon_{ab}\right), \theta\in\left[0,\frac{\pi}{3}\right]\right\} ,
\label{eq:3.12}
\end{equation}
where $R(\pi+\theta)$ is a rotation.

\longversion{
Notice that instead of (\ref{eq:3.12}) we may use the definition
\[
T_{ab}(\theta):=\left\{(x,y, g(\theta)) \mid (x,y)\in\left(R(\pi+\theta)\cdot \Omega + \varhexagon_{ab}\right), \theta\in\left[0,\frac{\pi}{3}\right]\right\} \ ,
\label{eq:3.12alt}
\]
where $g(\theta)$ may be any injective function.
}

We define the Minkowski sum $T$ for each of the $n$ lattice points in the ball $B_{D+3}$ (where $D+3$ is the radius of the ball), which is the sequence
\[
\{ T_{\overline{x}_{mn} }^{\Omega}(\theta) \ |\overline{x}_{mn} \in B_{D+3}\cap \Lambda_h \} .
\]
In order to find the optimal placement and orientation of the convex region $\Omega$, we propose Algorithm \ref{alg:opt_pos_and_orient} : \newline
Each point $(x,y,\theta)$ represents a placement of $\Omega$ at the location $(x,y)$ and orientation $\theta$. Notice that a point $(x,y,\theta)$ is in the interior of a domain $T^\Omega_{\overline{x}_{mn} }(\theta)$ if and only if placing $\Omega$ at a location $(x,y)$ and orientation $\theta$ intersect with the hexagon that is centered at $\overline{x}_{mn} $.
In this case every Minkowski sum, respective to a lattice point $\overline{x}_{mn}$, generates a three-dimensional body by rotating the convex domain continuously $\frac{\pi}{3}$ radians and adding the hexagon.
 We will examine the domains inside $P_0:=\varhexagon_0 \times [0, \frac{\pi}{3}]$ (hexagonal prism).

 The hexagonal prism will be divided into different regions each of which is the intersection of a different number of $T_{\overline{x}_{mn} }^\Omega(\theta)$'s for lattice points $\overline{x}_{mn} \in B_{(D+3)}$.
\shortversion{Similarly to the translation optimizing version, it can be shown that by finding the intersections of the surfaces formed by the boundaries of the Minkowski sums, the optimal placement and angle can now be determined. The intersections can be found by solving  systems of polynomial equations.}.

\begin{algorithm}[H]
\SetAlgoLined
\KwResult{List of centers of unit discs covering $\Omega$. }
 Given a convex polygon $\Omega$\;
  \If{Diameter of minimum enclosing circle of $\Omega\leq 1$ }
 {\Return center of minimum enclosing circle}
 Create the hexagonal prism:$P_0:=\varhexagon_0 \times [0, \frac{\pi}{3}]$

$points=\emptyset$

 \For{$\overline{x}_{mn} \in \Delta$}{
  Find $T_{\overline{x}_{mn}}^\Omega (\theta)$
  }
  Define $y(i)\in \partial T_{i}^\Omega$
  \For {i=1:$|\overline{x}_{mn}|$}{
    \For {j=1:$|\overline{x}_{mn}|$}{
        \For {k=1:$|\overline{x}_{mn}|$}{
        \If {$y_i=y_j=y_k \quad \text{and} \quad T_{i}\neq T_j \neq T_k$ \text{and} $y_i\in P_0$ }{$points=[points, y_k]$ }}}}%
        \If {$\text{Corner point} \in  T_{i}\cap T_j$}{$points=[points, y_i]$}
        \If{$P_0\cap \partial T_{i}^\Omega (\theta)\neq \emptyset$}{$points=[points, y(i)]$}
  \For{j=1:$|{points}|$}{
    \For{i=1:$|\overline{x}_{mn}|$}{
        Index(j)=0 \\
        \If {$points(i)\in Int T_{i}^\Omega (\theta)$}
            {$Index(j)=Index(j)+1$\\
           $ \min(Index(j))$}
        }}
        $point^*=points(\arg\_\min{\{Index(j)\}})$ \\

%
\Return list of all lattice points contained in the Minkowski sum
 \caption{Optimal orientation and translation}
 \label{alg:opt_pos_and_orient}
\end{algorithm}

\shortversion{
  We will show that in order to examine all of these regions, we need only to choose for each region special points on its boundary.

In order to find a combination of a location and an orientation of $\Omega$ which leads to a minimal cover, we will do as we did in the previous section, with a slight change. Instead of inspecting the intersections points by the form $(x,y)$ which lie on the boundary of the sequence of Minkowski sums or the hexagon boundary, we will inspect the intersection points of the form $(x,y,\theta)$ which belong to the boundary surfaces of different $T_{\overline{x}_{mn}}^{\Omega}(\theta)$'s or surfaces of the hexagonal prism.
Note that $\theta$ can be given by $g(\theta)$ which is injective on $[0, \frac{\pi}{3}]$.

\subsection{Non-convex polygon location and orientation}

We will adapt the convex domain method (Algorithm \ref{alg:opt_pos_and_orient}) to the non-convex case. Let $\Gamma$ be a not necessarily convex polygon.
Informally, Algorithm \ref{alg:non-conv} follows the following steps:
\begin{enumerate}
\item Find a triangulation of $\Gamma$, denote each triangle by $\Gamma_i$.
\item Denote by $T_{ab}^i(\theta)$ the Minkowski sum of rotation by $\theta$ and translation by $(a,b) \in \Lambda_h$ of $\Gamma_i$. For each three surfaces intersection $(x,y)$ of the $T_{ab}^i(\theta)$, calculate
    \[|\{(a,b)| \exists i (x,y)\in int\, T^i_{ab}(\theta)\}|\]
    and find the minimum.
\end{enumerate}
Notice that the algorithm is almost identical to Algorithm \ref{alg:opt_pos_and_orient} applied to each triangle. However, when calculating the cardinality of each suspected point, we only count distinct translations, rather than distinct triangles. I.e. intersections with triangles belonging to the same translation are only counted as one.

 \begin{algorithm}[H]
\SetAlgoLined
\KwResult{List of centers of unit discs covering $\Omega$. }
Given a non-convex polygon $\Gamma$ \newline
 \If{Diameter of minimum enclosing circle of $\Gamma\leq 1$ }
 {\Return center of minimum enclosing circle}
desired-lattice-points$=\emptyset$ \newline
Find a trinangulation of $\Gamma$ such that $\cup_{i=1}^n \Gamma_i=\Gamma$ \newline
 \While{$ab\in \Lambda_h \cap \Delta$ }{$\text{desired lattice points}=[\text{desired lattice points}, ab]$}
\For {i=1:n}{Find the Minkowaki sums $T_{\overline{x}_{mn}}^{\Gamma_i}(\theta)$}
Find all intersection points (same as in Algorithm \ref{alg:opt_pos_and_orient}).\newline
\For {each intersection point, $p$}{
Calculate $\left| \cup_{i=1}^n \left\{\overline{x}_{mn}|p\in T_{\overline{x}_{mn}}^{\Gamma_i}(\theta)\right\}\right|$.
}
\Return $\min_p \cup_{i=1}^n \left\{\overline{x}_{mn}|p\in T_{\overline{x}_{mn}}^{\Gamma_i}(\theta)\right\}$
 \caption{The Non-convex Algorithm}
 \label{alg:non-conv}
\end{algorithm}
%
%
}
\section{Correctness and performance bounds}

\longversion{
\begin{definition}
The canonical hexagon, which will be used in the Minkowski sum with $\Omega$, is defined as the hexagon in $\mathbb{R}^2$ having the following properties\longversion{ (see Figure \ref{fig:hex})}:
\begin{enumerate}
\item The center of gravity of $\varhexagon$ coincides with the origin.
\item Two of the vertices of the $\varhexagon$ lie on the $y$ axis.
\item The diameter of $\varhexagon$ is $2$ (i.e., the distance between the center of gravity and each vertex is $1$).
\end{enumerate}
\longversion{
\begin{figure}[htp]
\centering
\input{../Images/support_hex_new.tex}
	\caption{The support function $h_{\varhexagon}$ in the interval $\varphi\in[-\frac{\pi}{6},\frac{\pi}{6}]$}
\label{fig:hex}
\end{figure}
}
\end{definition}

\begin{lemma}
 The desired support function for the hexagon is
\begin{equation}
 h_{\varhexagon}(\varphi) = \begin{cases}
      \cos{\varphi} & \textrm{$-\frac{\pi}{6} < \varphi < \frac{\pi}{6}$} \\
      \cos{(\varphi-\frac{\pi}{3})} & \textrm{ $\frac{\pi}{6} < \varphi < \frac{\pi}{2}$} \\
       \cos{(\varphi-\frac{2\pi}{3})} & \textrm{ $\frac{\pi}{2} < \varphi < \frac{5\pi}{6}$} \\
        \cos{(\varphi-\pi)} & \textrm{ $\frac{5\pi}{6} < \varphi < \frac{7\pi}{6}$} \\
         \cos{(\varphi-\frac{4\pi}{3})} & \textrm{ $\frac{7\pi}{6} < \varphi < \frac{3\pi}{2}$} \\
          \cos{(\varphi-\frac{5\pi}{3})} & \textrm{ $\frac{3\pi}{2} < \varphi < \frac{5\pi}{3}$} \\
   \end{cases}
   \label{eq:0.7}
\end{equation}

\end{lemma}

\begin{proof}
The symmetry group of the regular hexagon is a group with order $12$ and is the dihedral group $D_6$; so it is sufficient to define the support function in the sector $-\frac{\pi}{6} < \varphi < \frac{\pi}{6}$.

If $N$ is the outward normal vector and $z$ the coordinates of the hexagon, then $h_{\varhexagon}=z(\varphi)\cdot N(\varphi)$, so indeed $h_{\varhexagon}=\cos\varphi$ \ .
\end{proof}
}
\shortversion{
The canonical hexagon is the diamter 2 regular hexagon having the center of gravity at the origin and two vertices on the $y$ axis.  Directly calculating the support function of the canonical hexagon, one obtains \begin{equation}
h_{\varhexagon}(\varphi)=\max_{n\in\mathbb{Z}}\cos\left(\varphi-\frac{n\pi}{3}\right)\;.
\label{eq:0.7}
\end{equation}
}
\begin{lemma}
If we place $\Omega$ in orientation $\theta$ such that
\[
w_0(\theta)+w_0\left(\theta+\frac{\pi}{3}\right)+w_0\left(\theta+\frac{2\pi}{3}\right) \quad \left( 0 < \theta \leq  \pi \right) \
\]
is minimal, where $w_0$ is the width of $\Omega$, then the Minkowski sum $ (R(\pi)\cdot\Omega) +\varhexagon$ is minimized, and the expected number of covering hexagons is
\[
\left\lfloor \frac{2}{3\sqrt{3}}A+\frac{2}{3\sqrt{3}}\left(w_0(\theta)+w_0\left(\theta+\frac{\pi}{3}\right)+w_0\left(\theta+\frac{2\pi}{3}\right)\right)+1\ \right\rfloor.
\]
\label{lem:exp}
\end{lemma}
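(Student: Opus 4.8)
\emph{Overall plan.} I would reduce the whole statement to a single area computation --- that of the Minkowski sum $M_\theta:=\varhexagon+R(\pi)\cdot R(\theta)\Omega$ --- and then feed the answer into the averaging argument already used in the proof of Theorem~\ref{thm:toth}. First I record that a regular hexagon of circumradius $1$ has side $1$ and area $\tfrac{3\sqrt3}{2}$, and that its lattice translates $\{\varhexagon_{\overline x_{mn}}\}$ tile the plane. Then, exactly as in the proof of Theorem~\ref{thm:toth}, Theorem~\ref{thm:Min_sum} together with linearity of expectation gives that the expected number of hexagons of $\Lambda_h$ met by a uniformly‑randomly translated copy of $R(\theta)\Omega$ equals $\mathrm{Area}(M_\theta)/\tfrac{3\sqrt3}{2}=\tfrac{2}{3\sqrt3}\,\mathrm{Area}(M_\theta)$; since the number $N$ actually met is an integer not exceeding this mean, some translation achieves $N\le\big\lfloor\tfrac{2}{3\sqrt3}\mathrm{Area}(M_\theta)\big\rfloor$. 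So it remains to compute $\mathrm{Area}(M_\theta)$ and to check it is minimized exactly when $S(\theta):=w_0(\theta)+w_0(\theta+\tfrac\pi3)+w_0(\theta+\tfrac{2\pi}3)$ is.

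\emph{The Minkowski area.} The support function of $M_\theta$ is $h=h_{\varhexagon}+g$, where $g(\varphi)=h_{R(\theta)\Omega}(\varphi+\pi)$ is the support function of $R(\pi)R(\theta)\Omega$ (point reflection sends $h_K(\varphi)$ to $h_K(\varphi+\pi)$, and rotation by $\theta$ sends $h_K(\varphi)$ to $h_K(\varphi-\theta)$). I would substitute this into Eq.~(\ref{eq:1.1}) and expand the square:
\[\mathrm{Area}(M_\theta)=\tfrac12\int_0^{2\pi}(h_{\varhexagon}^2-h_{\varhexagon}'^2)\,d\varphi+\tfrac12\int_0^{2\pi}(g^2-g'^2)\,d\varphi+\int_0^{2\pi}(h_{\varhexagon}g-h_{\varhexagon}'g')\,d\varphi.\]
The first two integrals are $\mathrm{Area}(\varhexagon)=\tfrac{3\sqrt3}{2}$ and $\mathrm{Area}(\Omega)=A$. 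For the cross term I would integrate by parts (the boundary terms cancel by $2\pi$‑periodicity), obtaining $\int_0^{2\pi} g\,(h_{\varhexagon}+h_{\varhexagon}'')\,d\varphi$ in the distributional sense. By the explicit form~(\ref{eq:0.7}) the support function equals a pure cosine on each of the six arcs, so $h_{\varhexagon}+h_{\varhexagon}''$ vanishes on the open arcs and reduces to $\sum_{j=0}^{5}\delta_{\phi_j}$, with the atoms at the six equally spaced edge normals $\phi_j$ (the jump of $h_{\varhexagon}'$ at each junction is $1$, the hexagon's side length). Hence the cross term is $\sum_{j=0}^5 g(\phi_j)=\sum_{j=0}^5 h_{R(\theta)\Omega}(\phi_j+\pi)$; pairing $j$ with $j+3$, so that $\phi_{j+3}=\phi_j+\pi$, collapses this to $\sum_{j=0}^2 w_{R(\theta)\Omega}(\phi_j)=\sum_{j=0}^2 w_0(\phi_j-\theta)$, which (using the $\tfrac\pi3$‑spacing of the $\phi_j$ and absorbing the fixed phase $\phi_0$ into the parametrization of orientation) is exactly $S(\theta)$. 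Therefore $\mathrm{Area}(M_\theta)=\tfrac{3\sqrt3}{2}+A+S(\theta)$.

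\emph{Conclusion.} Since $\tfrac{3\sqrt3}{2}$ and $A$ do not depend on the orientation, $\mathrm{Area}(M_\theta)$ --- equivalently the Minkowski sum $R(\pi)\Omega+\varhexagon$ --- is minimized precisely when $S(\theta)$ is, and since $w_0$ (hence $S$) is $\pi$‑periodic, restricting to $0<\theta\le\pi$ loses nothing. Plugging $\mathrm{Area}(M_\theta)=\tfrac{3\sqrt3}{2}+A+S(\theta)$ into the first paragraph gives the expected count $\tfrac{2}{3\sqrt3}\big(\tfrac{3\sqrt3}{2}+A+S(\theta)\big)=1+\tfrac{2}{3\sqrt3}A+\tfrac{2}{3\sqrt3}S(\theta)$, whose floor is the asserted bound. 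As a consistency check, averaging $S$ over $\theta\in[0,\pi]$ gives $\tfrac3\pi\int_0^\pi w_0\,d\theta=\tfrac{3L}{\pi}$ by Cauchy's formula (Theorem~\ref{thm:chauchy}), which recovers the $\tfrac{2}{\pi\sqrt3}L$ term of Eq.~(\ref{eq:1}); choosing the minimizing $\theta$ can only do better, which is the source of the improved bound.

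\emph{Main obstacle.} The delicate point is the middle step: treating the polygonal hexagon's support function distributionally --- identifying $h_{\varhexagon}+h_{\varhexagon}''$ with the atomic surface‑area measure $\sum_j\delta_{\phi_j}$ and justifying the integration by parts across the six kinks --- and then carrying out the antipodal bookkeeping that turns the six values $g(\phi_j)$ into the three widths $w_0(\theta),w_0(\theta+\tfrac\pi3),w_0(\theta+\tfrac{2\pi}3)$. Everything else is routine manipulation of rotations together with the already established averaging argument of Theorem~\ref{thm:toth}.
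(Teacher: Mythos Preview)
Your proof is correct and follows essentially the same route as the paper: both expand $\mathrm{Area}(\varhexagon+R(\pi+\theta)\Omega)$ via the support-function area formula~(\ref{eq:1.1}), identify the two pure terms as $\tfrac{3\sqrt3}{2}$ and $A$, and reduce the cross term to $\sum_{n=0}^5 h_0\big(\tfrac{(2n+1)\pi}{6}+\theta\big)=S(\theta)$. The only cosmetic difference is in evaluating that cross term --- the paper writes the integrand piecewise as the total derivative $(h_0(\varphi+\pi+\theta)\sin(\varphi+c))'$ and applies the fundamental theorem on each of the six arcs, while you integrate by parts once and read off the same boundary contributions from the distributional identity $h_{\varhexagon}+h_{\varhexagon}''=\sum_j\delta_{\phi_j}$; these are two packagings of the identical calculation.
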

\begin{proof}
We will define the Minkowski sum of $\left(R(\pi)\cdot\Omega \right)+ \varhexagon$. The hexagon will be defined by (\ref{eq:0.7}). The convex domain $\Omega$ will be placed such that the hexagon and $\Omega$ intersect. We will denote the center of gravity of $\Omega$ by $r$.
Take an intersection point of $\Omega$ and $\varhexagon$. Then \[r+z_0=z_1+ 0 \Longleftrightarrow r=z_1+(-z_0) \ ,\]
where $z_0 \in \Omega$ and $z_1 \in \varhexagon$\longversion{ (as in Figure \ref{fig:int})}.
Denote the respective support function of $z_0$ by $h_1$ and in a similar way for $z_1$ by $h_{\varhexagon}$.
So, we will define the support function of the Minkowski sum as
\[
\hat{h}=h_{\varhexagon}(\varphi)+h_0(\varphi+\theta+\pi)\, .
\]

\longversion{
\begin{figure}[htp]
\centering
\input{../Images/Min_su_vec_thesis.tex}
	\caption{The support function $h_{\varhexagon}$ and $h_{0}(\theta)$ defines $\hat{h}$.}
\label{fig:int}
\end{figure}
}

By (\ref{eq:1.1}) the area of the Minkowski sum is:
\begin{align} \label{eq:2.2}
A(\varhexagon &+ R(\pi+\theta)\cdot\Omega) = \nonumber \\
& = \frac{1}{2}\int_0^{2 \pi} \left(\left(h_{\varhexagon}(\varphi)+(h_0(\varphi+\pi+\theta)\right)^2-\left(h'_{\varhexagon}(\varphi)+h'_{0}(\varphi+\pi+\theta)\right)^2\right) d \varphi \nonumber \\
& = A+   \frac{\sqrt{27}}{2}+\int_0^{2\pi}\left(h_0(\varphi+\pi+\theta)h_{\varhexagon}(\varphi)-h'_0(\varphi+\pi+\theta)h'_{\varhexagon}(\varphi) \right) d \varphi  \ .
\end{align}
 Denote the shifting factor in (\ref{eq:0.7}) for every interval by $c$. Thus we will get
\begin{eqnarray}
\int_0^{2\pi} & \left(h_0(\varphi+\pi+\theta)\cdot  \cos(\varphi+c)+h'_0(\varphi+\pi+\theta)\sin(\varphi+c)\right) d \varphi \nonumber \\ & =  \int_0^{2\pi}\left(h_0(\varphi+\pi+\theta)\cdot \sin(\varphi+c)\right)' d \varphi \ .
\label{eq:2.3}
\end{eqnarray}
The solution for (\ref{eq:2.3}) where $-\frac{\pi}{6}<\varphi \leq \frac{\pi}{6}$ gives the support function $h_{\varhexagon}=\cos{(\varphi)}$, is
\[
\int_{-\frac{\pi}{6}}^{\frac{\pi}{6}}\left( h_0(\varphi+\pi+\theta) \cdot \sin(\varphi)\right)' d \varphi = \frac{1}{2}\cdot h_0 \left(\frac{7\pi}{6}+\theta \right)+\frac{1}{2}h_0\left(\frac{5\pi}{6}+\theta\right) \ .
\]
Similar calculations can be conducted for the other five regimes.

Merging all the results gives:
 \begin{equation}
 \int_0^{2\pi}\left(h_0(\varphi+\pi+\theta)\cdot \sin(\varphi+c)\right)' d \varphi =\sum_{n=0}^5
 h_0\left(\frac{(2n+1)\pi}{6}+\theta\right) \ ,
 \label{eq:1.3}
\end{equation}
leading to
\begin{equation}
 f(\theta)=w_0(\theta)+w_0\left(\theta+\frac{\pi}{3}\right)+w_0\left(\theta+\frac{2\pi}{3}\right) \ .
\label{eq:2.4}
\end{equation}
Since we would like to determine the orientation of $\Omega$ such that the Minkowski sum is minimal, it is sufficient to minimize $f$. Thus the points that should be inspected are the critical points of (\ref{eq:2.4}) which are either zeroes or discontinuities of the derivative of $f$.
\end{proof}
\begin{theorem}
For any convex domain, $\Omega$, there exists a covering of  $\Omega$ with at most
\[
\left\lfloor \frac{2}{3\sqrt{3}}A+\frac{2}{3\sqrt{3}}\left(w_0(\theta)+w_0\left(\theta+\frac{\pi}{3}\right)+w_0\left(\theta+\frac{2\pi}{3}\right)\right)+1 \right\rfloor
  \label{eq:2}
\]
unit discs, for any $\theta$.
\end{theorem}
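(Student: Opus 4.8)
The plan is to obtain this as an immediate consequence of the area computation carried out in Lemma \ref{lem:exp} together with the probabilistic (integral-geometry) averaging argument already used in the proof of Theorem \ref{thm:toth}. Fix an \emph{arbitrary} orientation $\theta$ and place $\Omega$ rotated by $\theta$; note that the computation in the proof of Lemma \ref{lem:exp} never uses the minimality of $f$, so we may keep $\theta$ general. If the minimum enclosing circle of $\Omega$ has diameter at most $1$, a single disc suffices and there is nothing to prove, so assume otherwise; then $\Omega$ (after a rigid motion) fits inside $\Delta$, and we may restrict attention to the finitely many lattice hexagons $\varhexagon_{mn}$ meeting $\Delta$.

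First, I would invoke Theorem \ref{thm:Min_sum}: for a fixed lattice hexagon $\varhexagon_{mn}$ and a translation of (the $\theta$-rotated) $\Omega$ chosen uniformly at random in a large ball of radius $r$, the probability that the two intersect is $\mathrm{Area}\bigl(\varhexagon_{mn} + R(\pi+\theta)\cdot\Omega\bigr)/(\pi r^2)$. Summing this over all lattice points inside the ball, and using that $\Lambda_h$ has exactly one point per fundamental cell of area $\tfrac{3\sqrt3}{2}$ (so there are $\sim \pi r^2 / \tfrac{3\sqrt3}{2}$ such points, with an $O(r)$ boundary discrepancy that is negligible in the ratio), the expected number of hexagons met by the randomly placed $\Omega$ converges, as $r\to\infty$, to $\mathrm{Area}\bigl(\varhexagon + R(\pi+\theta)\cdot\Omega\bigr)\big/\tfrac{3\sqrt3}{2}$. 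This is precisely the averaging step behind Eq.\ (\ref{eq:3.3}), so I would simply cite the proof of Theorem \ref{thm:toth} rather than repeat it.

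Next, substitute the area value from Lemma \ref{lem:exp}: by (\ref{eq:2.2}) and (\ref{eq:2.4}), $\mathrm{Area}\bigl(\varhexagon + R(\pi+\theta)\cdot\Omega\bigr) = A + \tfrac{\sqrt{27}}{2} + f(\theta)$ with $f(\theta) = w_0(\theta) + w_0(\theta+\tfrac{\pi}{3}) + w_0(\theta+\tfrac{2\pi}{3})$, and since $\tfrac{\sqrt{27}}{2} = \tfrac{3\sqrt3}{2}$, dividing by $\tfrac{3\sqrt3}{2}$ yields expected value $\tfrac{2}{3\sqrt3}A + \tfrac{2}{3\sqrt3}f(\theta) + 1$. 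Finally, the number of hexagons met by a given placement of $\Omega$ is a nonnegative integer-valued random variable; since its expectation is at most the displayed quantity, some placement attains a value not exceeding that quantity, hence (being an integer) at most its floor. For such a placement, $\Omega$ is contained in the union of the hexagons it meets, and circumscribing each of these hexagons by its unit disc produces a covering of $\Omega$ by at most $\bigl\lfloor \tfrac{2}{3\sqrt3}A + \tfrac{2}{3\sqrt3}f(\theta) + 1 \bigr\rfloor$ unit discs, as claimed.

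\textbf{Main obstacle.} There is no deep obstacle; the only point needing care is the limiting/averaging argument — justifying that the $O(r)$ boundary effects in counting lattice points in the ball are washed out in the ratio, and that "expectation $\le X$" legitimately produces an outcome $\le \lfloor X\rfloor$. Both are standard and are exactly what underlies the passage from (\ref{eq:3.3}) to Theorem \ref{thm:toth}, so I would dispatch this step by reference rather than reproving it.
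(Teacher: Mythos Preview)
Your proposal is correct and follows essentially the same route as the paper: the paper's proof is the one-liner ``there exists a cover with at most the integer part of the expected cover size, where the expected size is given in Lemma~\ref{lem:exp},'' and you have simply unpacked that sentence by spelling out the averaging step (via Theorem~\ref{thm:Min_sum}/Theorem~\ref{thm:toth}), substituting the Minkowski-sum area from (\ref{eq:2.2})--(\ref{eq:2.4}), and making explicit the ``integer-valued variable with mean $\le X$ has an outcome $\le\lfloor X\rfloor$'' step. Your observation that the computation in Lemma~\ref{lem:exp} nowhere uses the minimality of $\theta$ is exactly the point needed to get the ``for any $\theta$'' clause, and is implicit in the paper's one-line proof.
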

\begin{proof}
There exists a cover with at most the integer part of the expected cover size, where the expected size is given in Lemma \ref{lem:exp}.
\end{proof}
\longversion{
\begin{figure}[h]
\vskip -1.2in
\centering
\begin{minipage}[b]{0.45\linewidth}

\input{../Images/two_Min_sum1.tex}
\centering
\caption{The dotted area in the Minkowski sum of the inner square of the hexagon. The orientation of the square in this case is $\theta=0$ }
\label{fig:minipage1}
\end{minipage}
\quad
\begin{minipage}[b]{0.45\linewidth}
\vskip 0.8in
\input{../Images/two_min_sum2.tex}
\caption{The Minkowski sum of the same square in orientation $\theta=\frac{\pi}{2}$ and a hexagon, which give a bigger area than the square in orientation $\theta=0$}
\label{fig:minipage2}
\end{minipage}
\end{figure}
}

The definition of $f$ is valid for every $0 < \theta \leq  \pi$. Obviously, if the width is constant a minimization to $\theta$ is not possible. In this case we will get that
$
f(\theta)= \frac{3L}{\pi}$.
Since the  area of the fundamental region in the hexagonal lattice is $\frac{\sqrt{27}}{2}$, the ratio $ \frac{A\left((R(\pi)\cdot \Omega) + \varhexagon_0 \right)}{\sqrt{27}/2}$  determines the number of hexagonal lattice points which covers the convex domain $\Omega$.

\longversion{
In the case of constant width, this ratio leads to the result of Theorem \ref{thm:toth}.

\subsubsection{3D}
 \begin{theorem}
\[
(\alpha,\beta, \theta) \in T_{ab} \Leftrightarrow  \varhexagon_{ab} \cap E(\alpha, \beta, \theta) \cdot\Omega \neq \emptyset \ .
\]
 \end{theorem}
 \begin{proof}
  ($\Longleftarrow$)
 \begin{eqnarray*}
  \varhexagon_{ab}  \cap E(\alpha, \beta, \theta) \cdot\Omega \neq \emptyset & \Longrightarrow \exists \overline{v}:\left( \overline{v}\in\varhexagon_{ab}\cap \left(E(\alpha, \beta, \theta)\cdot\Omega\right)\right) \\ & \Longrightarrow \exists \overline{v}:((\overline{v}\in \varhexagon_{ab}) \wedge (\overline{v} \in( E(\alpha,\beta, \theta) \cdot \Omega))) \\ & \Longrightarrow \exists \overline{v}:((\overline{v}\in \varhexagon_{ab} )\wedge  ((\overline{v}-(\alpha,\beta)) \in (R(\theta) \cdot \Omega ))) \\ & \Longrightarrow
 \exists \overline{v}:(( \overline{v}\in \varhexagon) \wedge  ( -(\overline{v}-(\alpha,\beta)) \in -R(\theta) \cdot \Omega))  \\ & \Longrightarrow \exists \overline{v}:(( \overline{v}\in\varhexagon_{ab}) \wedge (-(\overline{v}-(\alpha,\beta)) \in (R(\pi+\theta)\cdot \Omega))) \\ & \Longrightarrow (\alpha,\beta) \in (((R(\pi+\theta)\cdot \Omega)+ \varhexagon_{ab})) \Longrightarrow (\alpha,\beta, \theta) \in T_{ab}.
\end{eqnarray*}

($\Longrightarrow$)
\[(\alpha,\beta,\theta) \in T_{ab} \Longrightarrow (\alpha,\beta)\in (R( \pi+\theta)\cdot \Omega) + \varhexagon_{ab},
\]
  so
  \begin{equation}
 (( \exists x_1 : x_1 \in (R(\pi+\theta)\cdot \Omega)) \wedge (\exists x_2 : x_2 \in \varhexagon_{ab})):x_1+x_2=(\alpha,\beta).
 \label{eq:3.4}
 \end{equation}

If
\begin{eqnarray*}
x_1 \in (R(\pi+\theta)\cdot\Omega) &\Longrightarrow -x_1 \in -(R(\pi+\theta)\cdot\Omega) \\ &\Longrightarrow -x_1 \in (R(\theta)\cdot\Omega)\Longrightarrow (-x_1+(\alpha,\beta))\in E(\alpha,\beta,\theta)\cdot\Omega.
\end{eqnarray*}
Now, $(-x_1+(\alpha,\beta))=x_2$ is always true by (\ref{eq:3.4}), so we will get that indeed the intersection in non empty. \end{proof}

Denote the ball $B_{D+3}$ as the ball whose center is the origin and has radius $D+3$, where $D$ is the diameter of $\Omega$.
Consider all the lattice points which fulfill $\overline{x}_{mn} \in  B_{D+3}\cap \Lambda_h$.

We will show how to obtain these boundary surfaces.

Each Minkowski sum $T_{\overline{x}_{mn} }^\Omega(\theta)$ is obtained by the following cases:
\begin{enumerate}[I.]
\item A surface obtained by the Minkowski sum of an edge of $\Omega$ and a vertex of the hexagon.
Define $g(\theta)$ the rotation function
\[
 \left\{( x \cos \theta -y \sin\theta, x \sin \theta+y \cos\theta, g(\theta)) | (x,y)\in \Omega \right\} + (\alpha,\beta,0) \ ,
\]
where $(\alpha,\beta,0)$ is a vertex of the unit hexagon, leading to the parametric surface (with parameter $\theta$):
\begin{eqnarray*}
 ( x_1 \cos \theta -y_1 \sin\theta, x_1 \sin \theta+y_1 \cos\theta, g(\theta))+  (\alpha,\beta,0)=\\ ( x_1 \cos \theta -y_1 \sin\theta+\alpha, x_1 \sin \theta+y_1 \cos\theta+\beta, g(\theta))\ .
\end{eqnarray*}
\item A surface obtained by the Minkowski sum of a vertex of $\Omega$ and an edge of the hexagon.
Without loss of generality, denote the vertices of the hexagon by $(\alpha_1,\beta_1,0)$ and $(\alpha_2,\beta_2,0)$. Then
\begin{eqnarray*}
 &&\left\{( x_1 \cos \theta  - y_1 \sin\theta, x_1 \sin \theta+y_1 \cos\theta, g(\theta)) | (x_1,y_1)\in \Omega \right\} \\ & +& (t(\alpha_1-\alpha_2,\beta_1-\beta_2,0))+ (\alpha_2,\beta_2,0) \\ &=&
 \left\{( x_1 \cos \theta -y_1 \sin\theta, x_1 \sin \theta + y_1 \cos\theta,  g(\theta)) | (x_1,y_1)\in \Omega \right\} \\ & +& (t (\alpha_1-\alpha_2)+   \alpha_2, t(\beta_1- \beta_2)+\beta_2,0) \\ & =&
 \left( x_1 \cos \theta - y_1 \sin \theta  +t(\alpha_1-\alpha_2)+ \alpha_2 , x_1 \sin \theta +y_1 \cos \theta +t(\beta_1-\beta_2)+ \beta_2, g(\theta) \right) \ .
 \end{eqnarray*}

\end{enumerate}
 We will show that indeed an intersection surface equation can be obtained.
 Every surface in the Minkowski sums is defined by
 \[
 \left\{( x \cos \theta -y \sin\theta, x \sin \theta+y \cos\theta, g(\theta)) | (x,y)\in \Omega \right\} + \varhexagon \ .
 \]
The boundary equation for each body can be calculated as follows:
\begin{enumerate}[I.]
\item A surface obtained by the Minkowski sum of a corner point of the hexagon, $\left(a,b\right)$ and an edge of $\Omega$. Then the desired equation is
\begin{eqnarray*}
 &t&  (x_1 \cos \theta  -y_1 \sin\theta, x_1 \sin \theta+y_1 \cos\theta, g(\theta)) \\
&+& (1-t)( x_2 \cos \theta -y_2 \sin\theta, x_2 \sin \theta+y_2 \cos\theta, g(\theta))+  (a,b,0) \\  &=&  \bigg(\big( t ( x_1 \cos \theta -y_1 \sin\theta)+(1-t)( x_2 \cos \theta -y_2  \sin\theta) +  a \big) , \left(t( x_1 \sin \theta+y_1 \cos\theta)\right)  \\  &+& (1-t)\big((x_2 \sin\theta+y_2 \cos\theta)+b\big), g(\theta)\bigg) \ .
\end{eqnarray*}
An algebraic formulation can be obtained as follows.
Let
\[
z:=g(\theta)=\sin \theta \ ,
\]
\begin{equation}
x=t ( x_1 \sqrt{1-z^2} -y_1 z)+(1-t)\left( x_2 \sqrt{1-z^2} -y_2 z \right) +   a \ ,
\label{eq:alg0}
\end{equation}
\begin{eqnarray}
y &=& t( x_1 z+y_1 \sqrt{1-z^2})+  (1-t)(x_2 z+y_2 \sqrt{1-z^2})+b \nonumber \\ &=&  t(x_1 z+y_1 \sqrt{1-z^2}-x_2 z-y_2 \sqrt{1-z^2})\nonumber\\&+&x_2z+ y_2 \sqrt{1-z^2}+b \ .
\label{eq:alg4}
\end{eqnarray}
Then by (\ref{eq:alg0})
\begin{equation}
t=\frac{x-x_2 \sqrt{1-z^2}+y_2z-a}{x_1 \sqrt{1-z^2}-y_1z-x_2 \sqrt{1-z^2}+y_2 z} \ .
\end{equation}
Substituting $t$ in (\ref{eq:alg4}) leads to the desired algebraic equation,
\begin{eqnarray}
y &=& \frac{x-x_2 \sqrt{1-z^2}+y_2z-a}{x_1 \sqrt{1-z^2}-y_1z-x_2 \sqrt{1-z^2}+y_2 z}\nonumber\\
&\cdot&\left(x_1 z+y_1 \sqrt{1-z^2}-x_2 z-y_2 \sqrt{1-z^2}\right)\nonumber\\&+&x_2z+ y_2 \sqrt{1-z^2}+b \ .
\end{eqnarray}
\item A surface obtained by the Minkowski sum of an edge of the hexagon,
 $(a_1,b_1)t+(a_2,b_2)(1-t)$ with a corner of $\Omega$, $(x_1,y_1)$. In this case, the desired equation is
%

\begin{equation}
(x_1 \cos \theta  -y_1 \sin\theta, x_1 \sin \theta+y_1 \cos\theta, g(\theta))+(a_1,b_1,0)t+(a_2,b_2,0)(1-t) \ ,
\end{equation}
then, in a similar way (for the same choice $z=g(\theta)=\sin\theta)$,
\begin{equation}
x=x_1 \sqrt{1-z^2}  -y_1 z+a_1t+a_2(1-t)
\label{eq:x}
\end{equation}
\begin{equation}
y=x_1 z+y_1  \sqrt{1-z^2}+b_1t+b_2(1-t) \, .
\label{eq:y}
\end{equation}
Then by (\ref{eq:x})
 \begin{equation}
 t=\frac{y_1z-x_1 \sqrt{1-z^2}-a_2}{a_1-a_2} \, .
 \end{equation}
 Substituting $t$ in (\ref{eq:y}) leads to the desired algebraic equation,
 \begin{equation}
 y=x_1 z+y_1  \sqrt{1-z^2}+b_1 \frac{y_1z-x_1 \sqrt{1-z^2}-a_2}{a_1-a_2}+b_2\left(\frac{a_1-y_1z+x_1 \sqrt{1-z^2})}{a_1-a_2}\right) \, .
 \end{equation}
\item
A surface of the hexagonal prism.
\end{enumerate}

Thus, the equation for each surface is of the form
\begin{equation}
y=f(\theta)x+h(\theta)\;.
\end{equation}
The intersection between every three surfaces can be obtained by
\begin{eqnarray*}
f_1(\theta)x+h_1(\theta)=f_2(\theta)x+h_2(\theta)\\
f_1(\theta)x+h_1(\theta)=f_3(\theta)x+h_3(\theta)\;.
\end{eqnarray*}
The condition for both equations having a solution is
\begin{equation}
(h_2(\theta)-h_1(\theta))(f_1(\theta)-f_3(\theta))=(h_3(\theta)-h_1(\theta))(f_1(\theta)-f_2(\theta))\;.
\label{eq:intersect}
\end{equation}
Each of the expressions $f_i$ and $h_i$ is of the form
\[
\frac{c_1z+c_2\sqrt{1-z^2}+c_3}{c_4 z+c_5\sqrt{1-z^2}+c_6}\;,
\]
with $f_i$ and $h_i$
for the same $i$ sharing the same denominator. Thus, the equation obtained is an algebraic equation of degree 6. We will now show that solving this equation
for every choice of three surfaces leads to all points that need to be examined in order to find the optimal location and orientation.

}
\shortversion{\subsection{Correctness of the algorithm}}
\subsubsection{Correctness for separate orientation and location optimization}
The hexagon will be divided into different regions each of which is the intersection of a different number of elements in the series $(R(\pi+\theta)\cdot\Omega) + \varhexagon_{\overline{x}_{mn}}$.
 In order to examine all of these regions, we will choose for each region special points on its boundary, which will be defined as follows:
\begin{definition}
 Denote $Q(x,y)=\left\{T_{\overline{x}_{mn}}^\Omega (\theta) |(x,y) \in \text{Int} T_{\overline{x}_{mn}}^\Omega\right\}$ .
 \end{definition}
That is,  $Q(x,y)$ is the set of all regions including the point $(x,y)$ in their interior.
\begin{definition}
 Denote $C(x,y)=\bigcap_{T\in Q(x,y)}T$ .
 \end{definition}
That is, $C(x,y)$ is the intersection of all regions in $Q(x,y)$.
Notice that $\{C(x,y)|(x,y)\in\varhexagon_0\}$ is a partition of the unit hexagon into equivalence classes of points, which are convex\longversion{ (see Figure \ref{fig:co_nin_sum})}.
\begin{definition}
 Denote $N(x,y)=|Q(x,y)|$, i.e., the number of domains whose interior contains $(x,y)$.
 \end{definition}
The index of the intersections boundaries of the Minkowski sums are convex domains and finite, thus the intersection points can be easily calculated.
\begin{theorem}
\label{thm:fixed_angle_points}
Let $(x,y)\in\varhexagon_0$ then $\exists (\tilde{x}, \tilde{y})\in\varhexagon_0$ such that $N(\tilde{x},\tilde{y}) \leq N(x,y)$ and $(\tilde{x},\tilde{y})$ is either
 \begin{enumerate}[(i)]
  \item one of the intersection points of the pair of $T^{\Omega}_{\overline{x}_{mn}}$, whose boundaries lie inside the domain $\varhexagon_0$; or the intersection of a domain  $T^{\Omega}_{\overline{x}_{mn}}$ with the edges of the hexagon;
 \item  a corner point of $T_{\overline{x}_{mn}}^\Omega$ (which is contained in $\varhexagon_0$) or a corner point of $\varhexagon_0$.
 \end{enumerate}

 \end{theorem}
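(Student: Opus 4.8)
The plan is to fix the orientation $\theta$ and study the planar subdivision of $\varhexagon_0$ cut out by the boundaries of the relevant Minkowski sums. Since $\theta$ is fixed and both $R(\pi+\theta)\cdot\Omega$ and $\varhexagon_{\overline x_{mn}}$ are convex polygons, each $T^\Omega_{\overline x_{mn}}(\theta)$ is a Minkowski sum of two convex polygons and hence is itself a convex polygon with finitely many edges and vertices. Moreover $\varhexagon_0\cap T^\Omega_{\overline x_{mn}}(\theta)=\emptyset$ whenever $\overline x_{mn}\notin\Delta$, so only finitely many of these polygons, say $T_1,\dots,T_s$, meet $\varhexagon_0$. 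Let $\mathcal A$ be the finite collection of line segments consisting of all edges of $T_1,\dots,T_s$ together with the six edges of $\varhexagon_0$. By construction, the pairwise intersection points of segments of $\mathcal A$ that lie in $\varhexagon_0$ are precisely the points listed in item~(i), and the endpoints of the segments of $\mathcal A$ that lie in $\varhexagon_0$ are precisely the corner points listed in item~(ii). Thus it suffices to prove that the minimum of $N$ over $\varhexagon_0$ is attained at a vertex of the arrangement $\mathcal A$.

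I would first record the behaviour of $N$ on the cells of this subdivision. If $F$ is a connected component of $\varhexagon_0^{\circ}\setminus\bigcup_{j}\partial T_j$, then $F$ is connected and disjoint from every $\partial T_j$; since $\mathbb R^2=\operatorname{Int}T_j\sqcup\partial T_j\sqcup(\mathbb R^2\setminus T_j)$, connectedness forces $F\subseteq\operatorname{Int}T_j$ or $F\cap T_j=\emptyset$ for each $j$, so $Q(x,y)$, and hence $N(x,y)$, is constant on $F$; the same argument shows that $N$ is constant on each relatively open edge of the subdivision. The crucial point is that ``$(x,y)\in\operatorname{Int}T_j$'' is an open condition, so $N$ is a finite sum of indicator functions of open sets and is therefore lower semicontinuous on the compact set $\varhexagon_0$. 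In particular, for a boundary point $p$ of a cell $F$ and any sequence $q_k\in F$ with $q_k\to p$, lower semicontinuity gives $N(p)\le\liminf_k N(q_k)=N|_F$; that is, passing from a cell to its boundary never increases $N$.

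It remains to push the minimum to a vertex. Being lower semicontinuous on a compact set, $N$ attains its minimum $m$ on $\varhexagon_0$; let $p$ be a minimizer. Every point of $\varhexagon_0$ lies either in an open two-dimensional cell $F$, or in a relatively open edge $e$ of the subdivision, or is a vertex of $\mathcal A$. If $p\in F$ then $N\equiv m$ on $F$, the boundary of the bounded region $\overline F$ contains at least one vertex $v$ of $\mathcal A$ (for instance an extreme point of $\overline F$ in a generic direction), and the semicontinuity bound gives $N(v)\le m$, hence $N(v)=m$. If $p\in e$ then $N\equiv m$ on $e$ and either endpoint $v$ of $e$ is a vertex of $\mathcal A$ with $N(v)\le m$, hence $N(v)=m$. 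In the remaining case $p$ is already a vertex of $\mathcal A$. In all cases $m$ is attained at some vertex $(\tilde x,\tilde y)$ of $\mathcal A$, so $N(\tilde x,\tilde y)=m\le N(x,y)$ for every $(x,y)\in\varhexagon_0$, and by the first paragraph $(\tilde x,\tilde y)$ is one of the points described in (i) or (ii).

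The step I expect to be the main obstacle is making the ``push to a vertex'' fully rigorous: a connected component $F$ of the complement of the segment system need not be convex (even though the equivalence classes $C(x,y)$ are), so one must argue with some care, e.g.\ via an extreme point of $\overline F$ in a generic direction, that $\overline F$ always carries a vertex of $\mathcal A$ on its boundary, and one must check that no vertex of the subdivision escapes the classification in (i)--(ii), including degenerate coincidences such as a vertex of some $T_j$ lying on $\partial\varhexagon_0$ or on an edge of another $T_k$. The convex-polygon structure of the $T_j$'s and the lower semicontinuity of $N$ are routine once stated.
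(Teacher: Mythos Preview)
Your proof is correct and rests on the same core observation as the paper's: the function $N$ cannot increase when one passes from a cell of the induced subdivision to its boundary, so one may descend to a vertex of the arrangement. The formalizations differ in a way worth noting. The paper argues \emph{locally}: from an arbitrary $(x,y)$ it moves to a point $(x',y')\in\partial C(x,y)$, uses openness of the interiors to get $Q(x',y')\subseteq Q(x,y)$, and then repeats the step on $C(x',y')$ to land on a corner or an intersection point. You argue \emph{globally}, packaging the same monotonicity as lower semicontinuity of $N=\sum_j\mathbf 1_{\operatorname{Int}T_j}$, invoking compactness of $\varhexagon_0$ to obtain a minimizer, and then pushing it to an arrangement vertex. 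Your route sidesteps the somewhat informal line in the paper's argument (``every point in $C(x',y')$ belongs to some edge\ldots''), at the cost of proving a bit more than the statement requires. Conversely, the paper's use of the convex sets $C(x,y)$ makes the step you flag as delicate---finding a vertex on $\partial\overline F$---automatic, since $C(x,y)$ is an intersection of convex polygons and hence itself a convex polygon whose extreme points are vertices of the arrangement.
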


 \begin{proof}
Take a point $(x,y)\in\varhexagon_0$. Now, let $(x',y')$ be an arbitrary point in $\partial C(x,y)$. If $T\in Q(x',y')$ then $ \int T \cap C(x,y)\neq \emptyset$, since $\int T$ is an open set and thus $T\in Q(x,y)$. Thus $Q(x',y')\subseteq Q(x,y)$, and therefore $N(x',y')\leq N(x,y)$.

Now consider $C(x',y')$. Every point in $C(x',y')$ belongs to some edge of some $T^{\Omega}_{\overline{x}_{mn}}$ or of $\varhexagon_0$. If the closure $\overline{C}(x',y')$ contains a corner of a polygon or of the unit hexagon, $(\tilde x,\tilde y)$, we are done, as every $T_{\overline{x}_{mn}}^\Omega$ covering $(\tilde x,\tilde y)$ must also cover the interior of $C(x',y')$. Otherwise, since $\overline C(x',y')$ is a closed set, consisting of the intersection of boundaries of polygons and the unit hexagon, it must contain at least one point, $(\tilde x,\tilde y)$, of the intersection between two $T_{\overline{x}_{mn}}^\Omega$ or an intersection of one of the $T_{\overline{x}_{mn}}^\Omega$ and the unit hexagon. By the same argument, $N(\tilde x,\tilde y)\leq N(x',y')$.\end{proof}

%

\begin{theorem}
Algorithm \ref{alg:opt_pos} gives the optimal placement.
\end{theorem}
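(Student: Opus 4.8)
The plan is to deduce the statement from Theorem~\ref{thm:fixed_angle_points} once the combinatorial function $N$ has been identified with ``number of discs used''. Concretely I would (a) show that, for a placement of $\Omega$ at the orientation $\theta^{*}$ produced by Lemma~\ref{lem:exp} and at location $(x,y)$, the number of lattice hexagons — hence of unit discs centred at lattice points — needed to cover $\Omega$ equals $N(x,y)$; (b) use periodicity of $\Lambda_h$ to restrict attention to locations $(x,y)\in\varhexagon_0$; (c) invoke Theorem~\ref{thm:fixed_angle_points} to conclude that $\min_{(x,y)\in\varhexagon_0}N(x,y)$ is attained at one of the finitely many candidate points enumerated by Algorithm~\ref{alg:opt_pos}; and (d) observe that the algorithm returns a minimiser of $N$ over exactly that set, together with the corresponding disc list.

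For (a): by the Minkowski-sum characterisation of intersection (the equivalence depicted in Figure~\ref{fig:min_sim}, applied to the rotated and reflected copy of $\Omega$ added to a hexagon), the copy of $\Omega$ placed at $(x,y)$ meets $\varhexagon_{\overline{x}_{mn}}$ exactly when $(x,y)\in T_{\overline{x}_{mn}}^{\Omega}(\theta^{*})$. Since the hexagons tile the plane, the family of hexagons met by a given placement always covers $\Omega$ and is the smallest subfamily of the lattice that does; hence the minimal lattice-based covering at that placement uses $|\{\overline{x}_{mn} : (x,y)\in T_{\overline{x}_{mn}}^{\Omega}(\theta^{*})\}|$ discs, and by the already-noted fact that $\varhexagon_0\cap T_{\overline{x}_{mn}}^{\Omega}(\theta^{*})=\emptyset$ for $\overline{x}_{mn}\notin\Delta$, this count is exactly $N(x,y)$ when $(x,y)\in\varhexagon_0$. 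For (b): $\Lambda_h$ tiles the plane by congruent hexagons, so any placement can be translated by a lattice vector to one whose location lies in $\varhexagon_0$, and this merely relabels the hexagons met; thus the least number of lattice discs achievable at orientation $\theta^{*}$ is $\min_{(x,y)\in\varhexagon_0}N(x,y)$.

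For (c) and (d): each $T_{\overline{x}_{mn}}^{\Omega}(\theta^{*})$ is a Minkowski sum of two convex polygons, hence a convex polygon with finitely many edges, and $\Delta$ contains finitely many lattice points; therefore the vertices of the $T$'s, the vertices of $\varhexagon_0$, and the pairwise boundary intersection points (two $T$'s, or a $T$ and $\partial\varhexagon_0$) lying in $\varhexagon_0$ form a finite set $S$, which is precisely what the nested loops of Algorithm~\ref{alg:opt_pos} collect; the algorithm then computes $N$ on $S$ and returns a minimiser $(x^{*},y^{*})$ together with the list of lattice points whose $T$ contains it. Optimality then follows by a sandwich: given any placement at orientation $\theta^{*}$, reduce it to a location $(x,y)\in\varhexagon_0$ as in (b); Theorem~\ref{thm:fixed_angle_points} supplies $(\tilde x,\tilde y)\in S$ with $N(\tilde x,\tilde y)\le N(x,y)$; and by the choice of $(x^{*},y^{*})$ we get $N(x^{*},y^{*})\le N(\tilde x,\tilde y)\le N(x,y)$. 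Hence $N(x^{*},y^{*})=\min_{(x,y)\in\varhexagon_0}N(x,y)$, which by (b) is the minimal disc count over all translations at orientation $\theta^{*}$, so the returned list is an optimal lattice-based covering for that orientation. The degenerate case in which $\Omega$ fits inside a single unit disc is disposed of by the initial test.

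The substance here is Theorem~\ref{thm:fixed_angle_points}, which is already proved; the rest is bookkeeping, but two points deserve a sentence of care, and I expect the second to be the only genuinely delicate one. First, a boundary/genericity caveat: if $(x^{*},y^{*})$ lies on $\partial T_{\overline{x}_{mn}}^{\Omega}(\theta^{*})$ the corresponding hexagon merely touches $\Omega$, so one must check that the output is still a valid covering — it is, because the closed unit disc circumscribing the hexagon contains the touched boundary points of $\Omega$ — and that no spurious minimum is introduced; the latter is handled by passing, if necessary, to an interior point of the face of the arrangement containing $(x^{*},y^{*})$, on which $N$ equals the true number of hexagons met, or equivalently by defining $N$ via the closures of the $T$'s. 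Second, ``optimal placement'' must be read as \emph{optimal translation for the fixed orientation} $\theta^{*}$: a better orientation-plus-translation combination may exist, which is exactly why Algorithm~\ref{alg:opt_pos_and_orient} is introduced and analysed separately.
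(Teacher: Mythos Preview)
Your proposal is correct and follows the same approach as the paper: deduce optimality of the returned placement directly from Theorem~\ref{thm:fixed_angle_points} by noting that the algorithm enumerates precisely the candidate points listed there and minimises $N$ over them. The paper's own proof is a single sentence to this effect; your version is considerably more careful, spelling out why $N(x,y)$ coincides with the number of covering discs, why periodicity lets one restrict to $\varhexagon_0$, and the boundary and scope caveats, all of which the paper leaves implicit.
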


 \begin{proof}
 The algorithm checks all points which are of one of the types mentioned in the statement of Theorem \ref{thm:fixed_angle_points}. The correctness of the theorem follows immediately from the algorithm.\end{proof}
  So finally we will shift $\Omega$ to the point $(\tilde{x},\tilde{y})$ and get the desired location.

\subsubsection{Correctness for Combined orientation and location algorithm; and non-convex polygon}

\begin{definition}
 Denote $Q(x,y,\theta)=\left\{T_{\overline{x}_{mn}}^\Omega|(x,y,\theta) \in \int T_{\overline{x}_{mn}}^\Omega(\theta)\right\}$ .
 \end{definition}
\begin{definition}
 Denote $C(x,y,\theta)=\bigcap_{T\in Q(x,y,\theta)}T$ .
 \end{definition}
Notice that $\{C(x,y,\theta)|(x,y,\theta)\in P_0\}$ is a partition of the hexagonal prism into equivalence classes which are convex bodies.
\begin{definition}
 Denote $N(x,y,\theta)=|Q(x,y,\theta)|$, i.e., the number of domains whose interior contains $(x,y,\theta)$.
 \end{definition}
\begin{theorem}
 Let $(x,y, \theta)\in P_0$. Then $\exists (\tilde{x}, \tilde{y}, \tilde{\theta})$ such that $N(\tilde{x},\tilde{y}, \tilde{\theta}) \leq N(x,y, \theta)$ and $(\tilde{x},\tilde{y}, \tilde{\theta})$
is an intersection of three different surfaces from the set of surfaces of $\{\partial T_{mn}\}\cup\{\partial P_0\}$.
The desired points are either
\begin{enumerate}[I.]
\item An intersection point of the boundaries of three domains $T^\Omega_{\overline{x}_{mn}}$ whose boundaries reside inside the domain $\varhexagon_0 \times [0, \frac{\pi}{3}]$.
\item An  intersection of the boundaries of two domains in the sequence $T^\Omega_{\overline{x}_k}$, and the boundary of the domain $\varhexagon_0 \times [0, \frac{\pi}{3}]$.
\end{enumerate}
 \label{thm:alg}
 \end{theorem}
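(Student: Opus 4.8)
The plan is to run, one dimension higher, the arrangement–descent argument that underlies Theorem~\ref{thm:fixed_angle_points}. Consider the finite family of surfaces $\mathcal S:=\{\partial T^{\Omega}_{\overline{x}_{mn}}\}\cup\{\partial P_0\}$: the Minkowski–sum boundaries are ruled algebraic patches (each swept out, as $\theta$ ranges over $[0,\tfrac\pi3]$, either by an edge of $\Omega$ against a vertex of the hexagon or by a vertex of $\Omega$ against an edge of the hexagon), and $\partial P_0$ is made of six rectangular sides and two hexagonal caps. These surfaces decompose $P_0$ into finitely many relatively open cells of dimensions $0,1,2,3$. Since $N(x,y,\theta)$ counts the bodies $T^{\Omega}_{\overline{x}_{mn}}$ whose \emph{interior} contains $(x,y,\theta)$ and each such interior is open, $N$ can only change when a surface of $\mathcal S$ is crossed; hence $N$ is constant on every relatively open cell. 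In the notation of the paper, the sets $C(x,y,\theta)$ are exactly these cells, and $N\ge 1$ throughout $P_0$ because any placement of the non-degenerate convex $\Omega$ meets at least one hexagon of $\Lambda_h$.

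First I would establish the monotonicity step. If $\tau\subseteq\overline\sigma$ are cells with $\dim\tau<\dim\sigma$, then $N|_\tau\le N|_\sigma$: fix $q\in\tau$ and $q'\in\sigma$ arbitrarily close to $q$; for each lattice point $\overline{x}_{mn}$, if $q\in\operatorname{Int}T^{\Omega}_{\overline{x}_{mn}}$ then, $\operatorname{Int}T^{\Omega}_{\overline{x}_{mn}}$ being open, also $q'\in\operatorname{Int}T^{\Omega}_{\overline{x}_{mn}}$, so $Q(q)\subseteq Q(q')$ and $N(q)\le N(q')$. This is the three-dimensional analogue of the first paragraph of the proof of Theorem~\ref{thm:fixed_angle_points}, phrased directly in terms of the cell structure.

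Next comes the descent. Let $\sigma^{*}$ be a cell on which $N$ attains its minimum over $P_0$ (possible, as $N$ takes finitely many values). Because $\sigma^{*}$ is bounded and relatively open, $\overline{\sigma^{*}}$ is a compact union of cells strictly containing $\sigma^{*}$, hence it contains cells of strictly smaller dimension; iterating, the cell of least dimension inside $\overline{\sigma^{*}}$ is $0$-dimensional. Pick such a $0$-cell $\tilde p=(\tilde x,\tilde y,\tilde\theta)$ and a chain $\sigma^{*}=\sigma_3\supset\sigma_2\supset\sigma_1\supset\{\tilde p\}$ of faces of decreasing dimension; applying the monotonicity step along this chain gives $N(\tilde p)\le N|_{\sigma^{*}}=\min_{P_0}N\le N(x,y,\theta)$ for any starting point $(x,y,\theta)$.

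Finally I would identify $\tilde p$. An isolated point of the arrangement $\mathcal S$ must lie on at least three of the surfaces in $\mathcal S$: two of these algebraic surfaces meet, where they meet properly, along a one-dimensional curve, whose endpoints inside $P_0$ occur precisely where a third surface arrives or where $\partial P_0$ is reached, so no $0$-cell can sit on only two surfaces in general position. This gives the dichotomy of the statement — either $\tilde p$ lies on three surfaces $\partial T^{\Omega}_{\overline{x}_{mn}}$, or on two of them together with a face of $\partial P_0$ (an intersection at an edge or vertex of the prism being accounted through its incident faces) — and, with $N(\tilde p)\le N(x,y,\theta)$, proves the theorem. The main obstacle is exactly this last step: unlike the planar case, where each $T^{\Omega}_{\overline{x}_{mn}}$ is a polygon and the relevant $0$-cells are visibly double edge-crossings or polygon corners, here the swept algebraic surfaces may be mutually tangent or otherwise meet non-transversally, so a priori a $0$-cell could come from an exceptional contact of just two surfaces. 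Ruling this out requires either a mild genericity hypothesis on $\Omega$ guaranteeing that all pairwise surface intersections are proper curves, or a limiting argument over such generic configurations, plus a careful description of how a one-dimensional cell terminates inside $P_0$; this is where the bulk of the technical work lies.
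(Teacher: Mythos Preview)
Your argument is essentially the same as the paper's: pass from $(x,y,\theta)$ to the boundary of $C(x,y,\theta)$, use openness of the interiors $T^{\Omega}_{\overline{x}_{mn}}$ to show $Q$ (hence $N$) can only shrink, and iterate down through the cell decomposition until a $0$-cell is reached. The paper carries out the same descent in less formal language and, notably, does not address the transversality concern you raise in your final paragraph either---it simply asserts that the resulting point is a triple intersection---so your proposal is at least as complete as the published proof.
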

\begin{proof}
 In a similar manner to what has been done for Algorithm \ref{alg:opt_pos}, we will do so for the $3D$ algorithm and the respective hexagonal prism.

 Take a point $(x,y,\theta)\in P_0$. Now, let $(x',y',\theta')$ be an arbitrary point in $\partial C(x,y,\theta)$. If $T\in Q(x',y',\theta')$ then $\int T\cap C(x,y,\theta)\neq \emptyset$, since $\int T$ is an open set. Thus $Q(x',y',\theta')\subseteq Q(x,y,\theta)$, and therefore $N(x',y',\theta')\leq N(x,y,\theta)$.

Now consider $C(x',y',\theta')$. Every point in $C(x',y',\theta')$ belongs to some surface of some $T^{\Omega}_{\overline{x}_{mn}}(\theta)$ or of $P_0$. If the closure $\overline{C}(x',y',\theta')$ contains a corner of the surface or of $P_0$, $(\tilde x,\tilde y,\tilde{\theta})$ we are done, as every $T_{\overline{x}_{mn}}^\Omega(\theta)$ covering $(\tilde x,\tilde y,\tilde{\theta})$ must also cover the interior of $C(x',y',\theta')$. Otherwise, since $\overline C(x',y',\theta')$ is a closed set, consisting of the intersection of boundaries of polygons and $P_0$, it must contain at least one point, $(\tilde x,\tilde y, \tilde{\theta})$, of intersection between two $T_{\overline{x}_{mn}}^\Omega$ or an intersection of one of the $T_{\overline{x}_{mn}}^\Omega$ and $P_0$. By the same argument, $N(\tilde x,\tilde y, \tilde{\theta})\leq N(x',y', \theta')$.\end{proof}
The bodies (each for a respective lattice point) which intersect the hexagonal prism divide it into sub-domains. The intersection between these domains which are contained in the hexagonal prism are plane lines or vertices. In the case of a line, since it is contained in the hexagonal prism, instead of studies all the points on the line we can take the respective end points, since it will not change in how many bodies this point belongs to the interior (as has been done in the 2-D algorithm). If the edge of the sub domain which is contained in the hexagonal prism is a plane, in a similar way it in bounded by lines, these lines have the same $N(x,y,\theta)$ for every point on the line. So it is sufficient to take the vertices of these lines.

\begin{theorem}
Algorithm \ref{alg:opt_pos_and_orient} gives the optimal placement.
\end{theorem}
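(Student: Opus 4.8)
The plan is to run essentially the same argument that established correctness of Algorithm~\ref{alg:opt_pos}, but with the planar arrangement of Minkowski sums replaced by the three-dimensional arrangement inside the hexagonal prism $P_0=\varhexagon_0\times[0,\frac{\pi}{3}]$, and with Theorem~\ref{thm:fixed_angle_points} replaced by Theorem~\ref{thm:alg}. The first task is to make the reduction explicit: a point $(x,y,\theta)$ lies in $\operatorname{Int}T_{\overline{x}_{mn}}^{\Omega}(\theta)$ precisely when the lattice hexagon $\varhexagon_{\overline{x}_{mn}}$ meets the copy of $\Omega$ placed at $(x,y)$ in orientation $\theta$; since the hexagons tile the plane, the hexagons meeting that copy form exactly a lattice-based covering of it, so $N(x,y,\theta)$ is the size of that covering (equivalently, the number of circumscribed unit discs). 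Because $\Lambda_h$ is periodic, every placement is equivalent to one parametrised by a point of $P_0$, and because no hexagon centred outside $B_{D+3}$ can meet $\Omega$, only the finitely many bodies $T_{\overline{x}_{mn}}^{\Omega}(\theta)$ with $\overline{x}_{mn}\in B_{D+3}\cap\Lambda_h$ are relevant. Hence the assertion that Algorithm~\ref{alg:opt_pos_and_orient} gives the optimal placement is equivalent to the statement that the point it outputs attains $\min_{P_0}N$.

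Next I would invoke Theorem~\ref{thm:alg}, which guarantees that from any $(x,y,\theta)\in P_0$ one can pass to a point $(\tilde x,\tilde y,\tilde\theta)$ with $N(\tilde x,\tilde y,\tilde\theta)\le N(x,y,\theta)$ lying on three of the surfaces in $\{\partial T_{\overline{x}_{mn}}^{\Omega}\}\cup\{\partial P_0\}$, or at a corner of one of these bodies or of $P_0$. Consequently $\min_{P_0}N$ is already attained on the finite set $S$ of all such triple-surface intersections and corner points contained in $P_0$. As sketched when the algorithm was introduced, each of these boundary surfaces admits an algebraic description, so every triple intersection is a solution of an explicit system of polynomial equations; such a system has finitely many solutions and can be solved, and this is exactly what the three nested loops together with the subsequent conditional tests of Algorithm~\ref{alg:opt_pos_and_orient} carry out, thereby enumerating $S$.

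For each candidate $p\in S$ the algorithm sets $\mathrm{Index}(p)=|\{\overline{x}_{mn}\in B_{D+3}\cap\Lambda_h : p\in\operatorname{Int}T_{\overline{x}_{mn}}^{\Omega}(\theta)\}|$, which is precisely $N(p)$, and returns a point $p^{*}$ minimizing $\mathrm{Index}$ over $S$ together with the list of lattice points $\overline{x}_{mn}$ satisfying $p^{*}\in T_{\overline{x}_{mn}}^{\Omega}$. Combining the two preceding paragraphs, $N(p^{*})=\min_{p\in S}N(p)=\min_{P_0}N$, and this is the minimum possible number of discs in a lattice-based covering of $\Omega$; hence placing $\Omega$ according to $p^{*}$ and centring unit discs at the returned lattice points is an optimal lattice covering. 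The degenerate case in which $\Omega$ already fits inside a single unit disc is disposed of by the first line of the algorithm.

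I expect the only real difficulty to lie in the passage from the continuum $P_0$ to the finite set $S$, that is, in Theorem~\ref{thm:alg} itself and in the completeness of the enumeration. The cell-decomposition argument requires that the surfaces $\partial T_{\overline{x}_{mn}}^{\Omega}(\theta)$ together with $\partial P_0$ cut $P_0$ into convex cells $C(x,y,\theta)$ on which $N$ is constant and does not increase as one moves to the boundary of a cell --- this uses that each $\operatorname{Int}T$ is open, so $Q(x',y',\theta')\subseteq Q(x,y,\theta)$ whenever $(x',y',\theta')\in\partial C(x,y,\theta)$ --- which forces the minimum of $N$ down to the $0$-skeleton of the arrangement; together with lower semicontinuity of $N$ (a finite sum of indicator functions of open sets) on the compact set $P_0$, this makes the infimum an attained minimum. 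One should also handle non-generic configurations in which more than three surfaces are concurrent: either perturb the lattice slightly, or argue directly that such a point of concurrency is still detected by some triple among the concurrent surfaces. The remaining bookkeeping --- finiteness of $B_{D+3}\cap\Lambda_h$, solvability of the polynomial systems, and correctness of the membership tests --- is routine.
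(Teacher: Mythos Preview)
Your proposal is correct and follows essentially the same approach as the paper: invoke Theorem~\ref{thm:alg} to reduce the minimization of $N$ over $P_0$ to the finite set of triple-surface intersections and corner points, and then observe that Algorithm~\ref{alg:opt_pos_and_orient} enumerates exactly this set and selects a minimizer. The paper's own proof is a two-sentence version of this argument, so your write-up is considerably more detailed (in particular the discussion of lower semicontinuity, non-generic concurrences, and the interpretation of $N(x,y,\theta)$ as the covering size) but not different in substance.
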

\begin{proof}
 The algorithm checks all points which are of one of the types mentioned in Theorem~\ref{thm:alg}, each case leads to a point. The correctness of the theorem follows immediately from the algorithm.\end{proof}

 So, finally, we will shift Ω to the point $(x,y,\theta)$ and get the desired location.


\subsubsection{Correctness for non-convex polygon placement}

The Algorithm in this case is similar to the algorithm which was obtained in Lemma \ref{thm:alg}. In order to simplify the structure of surfaces, it works only with convex domains (the Minkowski sum of non-convex domain is not necessarily composed of simple curves). However it only counts once the contribution from all triangles of the original domain and thus produces the desired result.

\shortversion{\subsection{Performance bounds}\label{sec:performance}}
\begin{theorem}
Let $n_\mathrm{opt}$ be the minimum number of unit discs necessary to cover
a convex domain $\Omega$ with area $A$ and circumference $L$, then the  algorithms give an approximation ratio of $ 1+\frac{8}{\pi\sqrt{3}}+o(1)$.
\end{theorem}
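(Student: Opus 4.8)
The plan is to obtain this as a corollary of the two main estimates already established: the upper bound implicit in Lemma~\ref{lem:exp} (and the theorem immediately following it) and the lower bound of Theorem~\ref{thm:lower_bnd}. No new geometry is needed; the work is in combining the two and in controlling the width term $f(\theta)=w_0(\theta)+w_0(\theta+\pi/3)+w_0(\theta+2\pi/3)$.

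First I would record that each of the algorithms outputs a covering of $\Omega$ by
\[
n_{\mathrm{alg}}\le \frac{2}{3\sqrt3}A+\frac{2}{3\sqrt3}f(\theta^*)+1
\]
unit discs, where $\theta^*$ is the orientation actually used: Algorithm~\ref{alg:opt_pos} chooses the location minimizing the number of intersecting hexagons for the orientation $\theta^*$ that minimizes $f$ over $(0,\pi]$, hence it does at least as well as the expected size in Lemma~\ref{lem:exp}; the combined Algorithm~\ref{alg:opt_pos_and_orient} can only do better. Since $\theta^*$ minimizes $f$, we have $f(\theta^*)\le \frac1\pi\int_0^\pi f(\theta)\,d\theta$, and by $\pi$-periodicity of $w_0$ together with Cauchy's formula (Theorem~\ref{thm:chauchy}),
\[
\frac1\pi\int_0^\pi f(\theta)\,d\theta=\frac{3}{\pi}\int_0^\pi w_0(\theta)\,d\theta=\frac{3L}{\pi}\, .
\]
Therefore $n_{\mathrm{alg}}\le \frac{2}{3\sqrt3}A+\frac{2L}{\pi\sqrt3}+1$ (the floor in Lemma~\ref{lem:exp} only helps). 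Next I invoke Theorem~\ref{thm:lower_bnd}: there is an absolute constant $C>0$ with $\frac{2A}{3\sqrt3}\le n_{\mathrm{opt}}+C$ and $\frac{L}{4}\le n_{\mathrm{opt}}+C$. Writing $\frac{2L}{\pi\sqrt3}=\frac{8}{\pi\sqrt3}\cdot\frac{L}{4}$ and substituting both bounds gives
\[
n_{\mathrm{alg}}\le \Bigl(1+\tfrac{8}{\pi\sqrt3}\Bigr)n_{\mathrm{opt}}+\Bigl(1+\tfrac{8}{\pi\sqrt3}\Bigr)C+1,
\]
so that
\[
\frac{n_{\mathrm{alg}}}{n_{\mathrm{opt}}}\le 1+\frac{8}{\pi\sqrt3}+\frac{(1+8/(\pi\sqrt3))C+1}{n_{\mathrm{opt}}}\, .
\]
Since $n_{\mathrm{opt}}\ge \frac{2A}{3\sqrt3}-C\to\infty$ as the region grows, the error term is $o(1)$, which is the claim.

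The only point requiring care is the meaning of the $o(1)$: it should be read as a bound holding uniformly over all convex domains whose size (say, area or diameter) tends to infinity, and the displayed inequality delivers exactly that because $C$ is an absolute constant independent of $\Omega$. I would also dispose of the degenerate case at the start (if the minimum enclosing circle has diameter at most $1$ the algorithms return a single disc and $n_{\mathrm{opt}}=1$, so the ratio is $1$), and note that for the non-convex algorithm the same argument does not directly apply since Theorem~\ref{thm:lower_bnd} was stated for convex $\Omega$; for the convex case there is no genuine obstacle beyond this bookkeeping, as all the substantive work lies in Lemma~\ref{lem:exp} and Theorem~\ref{thm:lower_bnd}.
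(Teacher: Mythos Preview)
Your proof is correct and follows essentially the same route as the paper's own argument: combine the upper bound $n\le \frac{2}{3\sqrt3}A+\frac{2L}{\pi\sqrt3}+1$ with the lower bound of Theorem~\ref{thm:lower_bnd} and observe that $\frac{2L}{\pi\sqrt3}=\frac{8}{\pi\sqrt3}\cdot\frac{L}{4}$. The paper reaches the upper bound by citing Theorem~\ref{thm:toth} directly, whereas you re-derive it from Lemma~\ref{lem:exp} via the averaging step $f(\theta^*)\le\frac1\pi\int_0^\pi f=3L/\pi$; and the paper writes the ratio with $\max\{2A/\sqrt{27},L/4\}-C$ in the denominator rather than substituting the two lower bounds separately, but these are cosmetic differences leading to the same conclusion.
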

\begin{proof}
The algorithms find an optimal (placement or angle and placement, respectively) covering using the hexagonal lattice. Thus, from Theorem \ref{thm:toth}, it follows that either algorithms give a covering using $n\leq\frac{2}{3\sqrt{3}}A+\frac{2}{\pi\sqrt3}L+1$.
Therefore, by Theorem \ref{thm:lower_bnd},
\[\frac{n}{n_\mathrm{opt}}\leq \frac{2A/\sqrt{27}+2L/(\pi\sqrt{3})+1} {\max\{2A/\sqrt{27},L/4\}-C}\leq 1+\frac{8}{\pi\sqrt{3}}+o(1)\;, \]
and the theorem follows.
\end{proof}

We consider two asymptotic scenarios.
In the case of fat regions, having $L=o(A)$ we have
\begin{conclusion}
Let $\Omega$ be a fat convex polygon. The algorithm is asymptotically optimal. That is, the covering it produces uses $n= (1+o(1))n_\mathrm{opt}$ discs where $ n_\mathrm{opt}$ is the minimum number of unit discs needed to cover $\Omega$.
\end{conclusion}

On the other hand, if $A = \alpha L$ for some constant $\alpha$.
Let $D$ be the diameter of $\Omega$. Then we have that $2D\le L\le \pi D$. We also have that if the width of $\Omega$ in the direction perpendicular to the diameter is $\beta$ then the area of $\omega$ must
be at least $D \cdot \beta/2$ (as one can built two triangles with the diameter as the base and a sum of heights of $\beta$). Thus, we has $\beta <2pi \cdot  A$.

Now use Lemma 4 and choose $\theta$ to be in the direction perpendicular to the diameter. one has $w(\theta)\le \beta$, and also, since $\Omega$ can be inscribed in a rectangle of sdes $D$ and $\beta$, we have $w(\theta+\frac{\pi}{3})<D\cos(\pi/3)+C$, and the same for $w(\theta+\frac{2\pi}{3})$. Thus, we have that
$$n\leq \lfloor \frac{2A}{3\sqrt{3}}+\frac{L}{3}+C\rfloor\;.$$

\section{Computational Complexity}
\subsection{Complexity of Separate orientation and location optimization algorithm}
\begin{theorem}
The optimal location and translation of $\Omega$ can be found in $O(D^3N^2)$ operations, where $D$ is the diameter of $\Omega$
and $N$ is the number of sides of $\Omega$.
\end{theorem}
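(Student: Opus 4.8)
The plan is to split Algorithm~\ref{alg:opt_pos} into its two phases and show that the translation phase dominates, yielding the stated $O(D^3N^2)$.

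\emph{Phase~I (the orientation $\theta^\ast$).} I would first observe that every step here is cheap. The support function of a convex $N$-gon is piecewise of the form $\cos(\varphi-c)$ with $O(N)$ breakpoints, so the width $w_0$ and hence $f(\theta)=w_0(\theta)+w_0(\theta+\tfrac{\pi}{3})+w_0(\theta+\tfrac{2\pi}{3})$ are piecewise-smooth with $O(N)$ breakpoints; by Lemma~\ref{lem:exp} it suffices to minimise $f$, and on each of its $O(N)$ smooth pieces the critical points and endpoint values are found in $O(1)$, so $\theta_0$ is obtained in $O(N\log N)$ (dominated by sorting the breakpoints). The diameter $D$ is found in $O(N)$ by rotating calipers, and $R(\pi+\theta)\cdot\Omega+\varhexagon_0$ has $O(N)$ edges and is built in $O(N)$. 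Since the algorithm has already returned when the minimum enclosing circle has diameter $\le 1$, we may assume $D\ge 1$, so Phase~I is absorbed by the Phase~II bound.

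\emph{Phase~II (the translation).} The grid $\Delta$ has area $O(D^2)$ and $\Lambda_h$ has bounded density, so there are $O(D^2)$ lattice points in $\Delta$, and each $T_{\overline{x}_{mn}}^\Omega(\theta^\ast)$ — a translate of the fixed convex set $K:=R(\pi+\theta^\ast)\cdot\Omega+\varhexagon_0$ — is computed in $O(N)$, for $O(D^2N)$ in total. The crucial reduction is that for locating the argmin of $N(\cdot)$ on $\varhexagon_0$ only the \emph{active} Minkowski sums matter, i.e.\ those whose boundary meets $\varhexagon_0$: any other $T_{\overline{x}_{mn}}^\Omega$ either misses $\varhexagon_0$ (irrelevant) or contains it, adding the same constant to $N(\cdot)$ everywhere on $\varhexagon_0$ and hence neither affecting the argmin nor contributing an arrangement vertex inside $\varhexagon_0$. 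Now $T_{\overline{x}_{mn}}^\Omega$ is active iff $\overline{x}_{mn}\in\varhexagon_0\oplus(-\partial K)$, and since $\partial K$ is a convex curve of length $\mathrm{per}(\Omega)+\mathrm{per}(\varhexagon_0)=O(L+1)=O(D)$ (using $L\le\pi D$), covering $\partial K$ by $O(D)$ unit disks shows this set has area $O(D)$; so there are only $m=O(D)$ active Minkowski sums. By Theorem~\ref{thm:fixed_angle_points} an optimal translation is realised at a vertex of the arrangement, inside $\varhexagon_0$, of these $m$ active convex $O(N)$-gons together with $\varhexagon_0$ — there are $O(m^2N)=O(D^2N)$ such candidate points, obtained by pairwise intersection of the $O(mN)$ edges involved. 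For each candidate we compute $N(\cdot)$ by an $O(N)$ point-in-convex-polygon test against each of the $m$ active polygons, i.e.\ $O(mN)=O(DN)$ per candidate, for $O(D^2N)\cdot O(DN)=O(D^3N^2)$; the uniform offset from the containing Minkowski sums is computed once within the same budget. Combining the two phases gives $O(D^3N^2)$.

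\emph{Main obstacle.} The step that needs care is the ``$O(D)$ active Minkowski sums'' bound and its use: one must verify that $\mathrm{per}(K)=\mathrm{per}(\Omega)+\mathrm{per}(\varhexagon_0)$ (Minkowski sums add perimeters of convex bodies, via $L=\int h\,d\theta$) with $\mathrm{per}(\Omega)=O(D)$; that the ``$1$-tube'' around a rectifiable curve of length $\ell$ has area $O(\ell)$; and — invoking Theorem~\ref{thm:fixed_angle_points} — that Minkowski sums disjoint from $\varhexagon_0$ or containing $\varhexagon_0$ may be discarded, so the arrangement whose vertices we enumerate involves only $O(D)$ curves rather than $O(D^2)$. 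Everything else is a routine arrangement-size and point-location count.
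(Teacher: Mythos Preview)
Your argument is correct and follows the same two–step outline as the paper's proof — enumerate the candidate points of Theorem~\ref{thm:fixed_angle_points} inside $\varhexagon_0$, then test each for containment — but you supply the justification that the paper's three lines leave implicit. The paper merely asserts that ``the number of intersection points between polygons is at most $O(D^2N)$'' and that each containment test costs $O(ND)$, without saying why the factor is $D$ rather than $D^2$ (there being $O(D^2)$ translates in $\Delta$). Your observation that only $m=O(D)$ of the translates $T_{\overline{x}_{mn}}^\Omega$ are \emph{active}, i.e.\ have boundary meeting $\varhexagon_0$, is precisely what makes these counts work: it bounds the number of contributing pairs by $O(D^2)$, hence $O(D^2N)$ arrangement vertices, and the number of polygons to test against by $O(D)$, hence $O(DN)$ per point. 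The perimeter-based area estimate for $\varhexagon_0\oplus(-\partial K)$ and the remark that fully containing or fully disjoint translates shift $N(\cdot)$ by a constant are both sound. In short, your proof is a fleshed-out version of the paper's, with the ``active'' reduction being the one genuinely new ingredient you add.
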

\begin{proof}
The number of domains is the number of lattice points in $\Delta$. Due to the properties of the Minkowski sum with the hexagon, the number of edges of each polygon is at most $N+6$.
Thus, the number of intersection points between polygons is at most $O(D^2N)$. For each such intersection point, one needs to examine how many other polygons contain it, requiring $O(ND)$ operations. Thus, the total time complexity is $O(D^3N^2)$.
\end{proof}

\subsection{Complexity of the combined orientation and location optimization algorithm}

\begin{theorem}
The optimal location and translation of $\Omega$ can be found in $O(D^8N^4)$ operations, where $D$ is the diameter of $\Omega$
and $N$ is the number of sides of $\Omega$.
\end{theorem}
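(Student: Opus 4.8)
The plan is to mirror the counting argument used for Algorithm~\ref{alg:opt_pos}, but carried out in the three–dimensional configuration space whose third coordinate encodes the orientation $\theta$. First I would bound the number of Minkowski–sum bodies: these are indexed by the lattice points $\overline{x}_{mn}\in B_{D+3}\cap\Lambda_h$, and since the fundamental domain of $\Lambda_h$ has constant area, there are $O(D^2)$ of them. Next, each body $T^{\Omega}_{\overline{x}_{mn}}(\theta)$ is bounded by surfaces of the two types described above — ``edge of $\Omega$ plus vertex of $\varhexagon$'' and ``vertex of $\Omega$ plus edge of $\varhexagon$'' — of which there are $O(N)$ per body (the hexagon contributes only a constant $6+6$), so the total number of boundary surfaces, together with the $O(1)$ faces of the prism $P_0$, is $S=O(D^2N)$.

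By Theorem~\ref{thm:alg}, every equivalence class $C(x,y,\theta)$ has in its closure a point that is either a corner of a body or of $P_0$, or the common intersection of three of the $S$ surfaces; hence it suffices to enumerate all such candidate points and evaluate $N(\cdot)$ on each. The number of triples of surfaces is $\binom{S}{3}=O(D^6N^3)$. For each triple, the analysis preceding (\ref{eq:intersect}) shows that eliminating $x$ and $y$ reduces the system to a single univariate algebraic equation of degree $6$ in $z=\sin\theta$; solving it produces only a constant number of candidate values of $\theta$, and back–substitution gives a constant number of candidate points $(x,y,\theta)$. The corner points and the type~II points (two surfaces together with a face of $P_0$) contribute only $O(S^2)=O(D^4N^2)$ further candidates, which is of lower order. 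Thus the total number of candidate points is $O(D^6N^3)$.

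Finally, for each candidate point $p=(x,y,\theta)$ the algorithm computes $N(p)=|Q(p)|$ by testing, for each of the $O(D^2)$ bodies, whether $p$ lies in the interior of the corresponding body; since the $\theta$–section of a body is the convex polygon $R(\pi+\theta)\cdot\Omega+\varhexagon_{\overline{x}_{mn}}$, which has $O(N)$ edges and can be formed and tested for point containment in $O(N)$ time, each evaluation of $N(\cdot)$ costs $O(D^2N)$. Multiplying the $O(D^6N^3)$ candidate points by the $O(D^2N)$ cost of evaluating $N(\cdot)$ yields the claimed $O(D^8N^4)$ bound, and the minimizing candidate gives the optimal orientation and translation by Theorem~\ref{thm:alg}.

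I expect the only genuinely non-routine point to be the justification that each triple–of–surfaces intersection is governed by a degree–$6$ algebraic equation and therefore contributes only $O(1)$ candidate points; this rests on the explicit rational parametrizations $y=f_i(\theta)x+h_i(\theta)$ in which $f_i$ and $h_i$ share a denominator of the form $c_4z+c_5\sqrt{1-z^2}+c_6$, so that (\ref{eq:intersect}) clears to a polynomial of bounded degree after the substitution $z=\sin\theta$ — together with the (elementary but necessary) check that the sweep does not create spurious extra components that would inflate the count. Everything else is straightforward bookkeeping of the sizes established above.
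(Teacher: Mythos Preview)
Your argument is correct and follows essentially the same route as the paper: count $O(D^2N)$ boundary surfaces, take $\binom{S}{3}=O(D^6N^3)$ triples each yielding a bounded-degree algebraic equation with $O(1)$ solutions, and then spend $O(D^2N)$ per candidate to evaluate containment, for a total of $O(D^8N^4)$. Your write-up is in fact more careful than the paper's own proof, in particular in separating out the lower-order corner and type~II contributions and in spelling out why each triple produces only constantly many candidate points.
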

\begin{proof}

The algorithm goes over all choices of three surfaces and checks the intersection points. The number of surfaces is at most $(2D+7)^2(N+6)$
plus 6 surfaces of the hexagonal prism. There are at most 6 intersection points for each choice of three surfaces. Thus, the number of points that
is $O(D^6N^3)$.
For each
such intersection point, the algorithm needs to examine how many of the $(2D+7)^2$ regimes contains the point, requiring
examining $N+6$ inequalities.
\end{proof}

\subsection{Complexity non-convex algorithm}
\begin{theorem}
The optimal location and translation of $\Gamma$ can be found in $O(D^8N^4)$ operations, where $D$ is the diameter of $\Gamma$
and $N$ is the number of sides of $\Omega$.
\end{theorem}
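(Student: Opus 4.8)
The plan is to follow the complexity analysis of Algorithm~\ref{alg:opt_pos_and_orient} almost verbatim, inserting one extra factor of $N$ at exactly two places: the number of swept bodies, and the per-point bookkeeping. First I would recall that a simple polygon $\Gamma$ with $N$ vertices is triangulated into $n=N-2$ triangles $\Gamma_1,\dots,\Gamma_n$, which costs $O(N\log N)$ and is dominated by the rest. Since $\operatorname{diam}(\Gamma_i)\le\operatorname{diam}(\Gamma)=D$ for every $i$, the same grid $\Delta$ serves all triangles, and $|\Lambda_h\cap\Delta|=O(D^2)$.

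Next I would count the boundary surfaces. For a fixed triangle $\Gamma_i$ (three sides) and a fixed lattice point $\overline{x}_{mn}$, the Minkowski-sum body $T_{\overline{x}_{mn}}^{\Gamma_i}(\theta)$ has a boundary made of at most $3+6=9$ algebraic surfaces (edge of $\Gamma_i$ with vertex of the hexagon, vertex of $\Gamma_i$ with edge of the hexagon, together with the six faces of the hexagonal prism $P_0$, which are common to all bodies). Hence the total number of candidate surfaces is $O(D^2\cdot N\cdot 1)=O(D^2N)$, the same order as the $(2D+7)^2(N+6)$ surfaces of the convex algorithm. Exactly as in the proof for Algorithm~\ref{alg:opt_pos_and_orient}, ranging over all unordered triples of these surfaces and solving the resulting constant-degree system produces at most a constant number ($\le 6$) of intersection points per triple, for a total of $O\big((D^2N)^3\big)=O(D^6N^3)$ candidate points $p$; the corner points and surface--prism intersections add only $O(D^2N)$ more, so the whole candidate set has size $O(D^6N^3)$.

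Then I would bound the cost of evaluating, for each candidate $p$, the cardinality $\big|\bigcup_{i=1}^{n}\{\overline{x}_{mn}\mid p\in\operatorname{int}T_{\overline{x}_{mn}}^{\Gamma_i}(\theta)\}\big|$ that Algorithm~\ref{alg:non-conv} minimises. I would keep a boolean flag for each of the $O(D^2)$ lattice points; then for each of the $O(N)$ triangles and each of the $O(D^2)$ lattice points, deciding whether $p$ lies in the interior of $T_{\overline{x}_{mn}}^{\Gamma_i}(\theta)$ requires only a constant number ($\le 9$) of inequality tests, after which the flag of that lattice point is raised; summing the flags returns the value. This is $O(D^2N)$ per candidate, so over all $O(D^6N^3)$ candidates the total is $O(D^8N^4)$, and maintaining the running minimum adds no overhead. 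The bound coincides with the convex one because each triangle piece carries only $O(1)$ surfaces, so the triangulation inflates the $N$-exponent (once in the surface count, once in the membership sweep) but not the $D$-exponent.

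The step I expect to require the most care is conceptual rather than arithmetic: one must verify that raising a flag \emph{per lattice point}, rather than incrementing a counter per triangle, is exactly the ``count distinct translations'' semantics needed for a valid disc count, and that, by the correctness argument accompanying Theorem~\ref{thm:alg} adapted to the non-convex setting, the finite family of triple-surface, surface--prism and corner points genuinely contains a global minimiser of $N(x,y,\theta)$ over $P_0$. Granted these two points, the running time is simply the product of the four counts above, and the theorem follows. (The phrase ``$N$ is the number of sides of $\Omega$'' in the statement should of course read ``sides of $\Gamma$''.)
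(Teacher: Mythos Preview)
Your proposal is correct and follows essentially the same approach as the paper: count $O(D^2N)$ surfaces coming from $O(N)$ triangles times $O(D^2)$ lattice points, take $O(D^6N^3)$ triple-intersection candidates, and spend $O(D^2N)$ per candidate checking containment across all triangle--lattice-point pairs, for a total of $O(D^8N^4)$. The only minor differences are that the paper cites Chazelle's $O(N)$ triangulation rather than your $O(N\log N)$, and that you make the ``flag per lattice point'' semantics explicit where the paper leaves it implicit; neither affects the argument or the bound.
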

\begin{proof}
Producing a trianulation of a polygon requires $O(N)$ operations \cite{Chazelle1991}.
Given a  triangulation by $N-2$ triangles of $\Gamma$. The number of surfaces respective to $T_{\overline{x}_{mn}}^{\Gamma_i}(\theta)$ is (in a similar way to the 3D algorithm) at most $O(D^2 N)$.
The algorithm goes over all choices of three surfaces and checks the intersection points. There are at most 6 intersection points for each choice of three surfaces. Thus the number of points that
is $O(D^6N^3)$.
For each
such intersection point the algorithm need to examine how many of the $O(D^2N)$ regimes contains the point.
\end{proof}

\bibliographystyle{plainurl}
\bibliography{manuscript}
\end{document}